\newcolumntype{x}[1]{%
>{\raggedleft\hspace{0pt}}p{#1}}%
\DeclareFontFamily{U}{mathx}{\hyphenchar\font45}
\DeclareFontShape{U}{mathx}{m}{n}{
      <5> <6> <7> <8> <9> <10>
      <10.95> <12> <14.4> <17.28> <20.74> <24.88>
      mathx10
      }{}
\DeclareSymbolFont{mathx}{U}{mathx}{m}{n}
\DeclareMathAccent{\widecheck}{0}{mathx}{"71}
\DeclareMathAccent{\wideparen}{0}{mathx}{"75}
\newcommand{\interior}[1]{\accentset{\smash{\raisebox{-0.12ex}{$\scriptstyle\circ$}}}{#1}\rule{0pt}{2.3ex}}
\newcommand{\eps}{\varepsilon}
\newcommand{\E}{\mathsf{E}}
\newcommand{\var}{\mathsf{Var}\,}
\renewcommand{\P}{\mathsf{P}}
\newcommand{\zero}{\mathbf{0}}
\newcommand{\balpha}{\bm{\alpha}}
\newcommand{\bbeta}{\bm{\beta}}
\newcommand{\bgamma}{\bm{\gamma}}
\newcommand{\btheta}{\bm{\Theta}}
\newtheorem{theorem}{Theorem}%[section]
\newtheorem{corollary}[theorem]{Corollary}
\theoremstyle{definition}
\newtheorem{definition}{Definition}%[section]
\theoremstyle{remark}
\journal{\color{white} Insurance: Mathematics and Economics}
\begin{document}

\begin{frontmatter}

%% Title, authors and addresses

%% use the tnoteref command within \title for footnotes;
%% use the tnotetext command for the associated footnote;
%% use the fnref command within \author or \address for footnotes;
%% use the fntext command for the associated footnote;
%% use the corref command within \author for corresponding author footnotes;
%% use the cortext command for the associated footnote;
%% use the ead command for the email address,
%% and the form \ead[url] for the home page:
%%
%% \title{Title\tnoteref{label1}}
%% \tnotetext[label1]{}
%% \author{Name\corref{cor1}\fnref{label2}}
%% \ead{email address}
%% \ead[url]{home page}
%% \fntext[label2]{}
%% \cortext[cor1]{}
%% \address{Address\fnref{label3}}
%% \fntext[label3]{}

\title{Conditional Least Squares and Copulae in Claims Reserving for a~Single Line of Business}

%% use optional labels to link authors explicitly to addresses:
%% \author[label1,label2]{<author name>}
%% \address[label1]{<address>}
%% \address[label2]{<address>}

\author[mff]{Michal Pe\v{s}ta\corref{cor1}} \ead{michal.pesta@mff.cuni.cz}

\author[hu]{Ostap Okhrin} \ead{ostap.okhrin@wiwi.hu-berlin.de}

\cortext[cor1]{Corresponding author, \emph{tel:} (+420) 221 913 400, \emph{fax:} (+420) 222 323 316}

\address[mff]{Charles University in Prague, Faculty of Mathematics and Physics, Department of Probability and Mathematical Statistics, Sokolovsk\'{a} 83, CZ-18675 Prague, Czech Republic}

\address[hu]{Humboldt University of Berlin, School of Business and Economics, Ladislaus von Bortkiewicz Chair of Statistics, C.A.S.E.~-- Center for Applied Statistics and Economics, Spandauer Strasse~1, D-10178 Berlin, Germany}

\begin{abstract}
One of the main goals in non-life insurance is to estimate the claims reserve distribution. A~generalized time series model, that allows for modeling the conditional mean and variance of the claim amounts, is proposed for the claims development. On contrary to the classical stochastic reserving techniques, the number of model parameters does not depend on the number of development periods, which leads to a~more precise forecasting.

Moreover, the time series innovations for the consecutive claims are not considered to be independent anymore. Conditional least squares are used for model parameter estimation and consistency of such estimate is proved. Copula approach is used for modeling the dependence structure, which improves the precision of the reserve distribution estimate as well.

Real data examples are provided as an illustration of the potential benefits of the presented approach.
\end{abstract}

\begin{keyword}
%% keywords here, in the form: keyword \sep keyword
claims reserving \sep reserve distribution \sep dependency modeling \sep copula \sep conditional least squares

%% MSC codes here, in the form: \MSC code \sep code
%% or \MSC[2008] code \sep code (2000 is the default)

\begin{flushleft}\vspace{-.13cm}
\emph{JEL classification:} C13, C32, C33, C53, G22
\end{flushleft}

\begin{flushleft}\vspace{-.13cm}
\emph{Subject Category and Insurance Branch Category:} IM10, IM11, IM20, IM40
\end{flushleft}

\begin{flushleft}\vspace{-.13cm}
\emph{MSC classification:} 60G10, 60G25, 60J20, 62H10, 62H20, 62J02, 62P05
\end{flushleft}

\end{keyword}

\end{frontmatter}

% \linenumbers

%% main text

\section{Introduction}
Claims reserving is one of the most important issues in general insurance. A~large number of various methods has been invented, see~\cite{england2002} or~\cite{wutrich_kniha} for an~overview.

Main aim of this paper is to deal with serious issues in contemporary reserving techniques, which are quite often set aside, but cause serious problems in the actuarial estimation and prediction. Such pitfalls are assumption of independent claims, independent stochastic errors (or residuals) in the corresponding claims reserving model, and considering large number of parameters often depending on the number of observations.

Majority of the classical approaches are based on the assumption that the claim amounts in different years are independent. However, this assumption can sometimes be unrealistic or at least questionable. It has been pointed out that methods, which enable~\emph{modeling the dependencies}, are needed, cf.~\cite{antonio2} or~\cite{HP2013}. The mentioned papers suggest the generalized linear mixed models (GLMM) or generalized estimating equations (GEE) to handle the possible dependence among the incremental claims in successive development years. These approaches extend the classical GLM and are frequently used in panel (longitudinal) data analyses. In this paper, we present another possible attitude, namely the \emph{conditional mean-variance} model with a~\emph{copula}.

On one hand, time series model by~\cite{BBMW2006} nicely and simply allow to model conditional mean and variance of the claim amounts. On the other hand, that model possesses two disadvantages, which are common for a~huge majority of the reserving methods: infinite number of parameters (i.e., depending on the number of observation) and independent errors. Generally, large number of parameters decreases the precision of estimation, because of not sufficient amount of data for estimation. Furthermore, the classical statistical inference is not valid anymore when the number of parameters depends on the number of observation. To overcome such difficulties, we consider a~generalized time series model with \emph{a~finite number of parameters not depending on the number of development periods} and, additionally, the \emph{model errors} belonging to the same accident period are \emph{not independent}.

Moreover, all the currently used bootstrap methods in claims reserving require independent residuals in order to estimate the distribution of the reserve and, consequently, calculate some distributional quantities, e.g., VaR at $99.5\%$. Assumption of independent residuals can be quite unrealistic in the claims reserving setup. Hence, an alternative and more suitable resampling method needs to be proposed in order to sensibly estimate the reserves distribution.

Copulae have already been utilized in the claims reserving to model dependences between different lines of business, e.g., \cite{SF2011}. On the contrary, it has to be emphasized that in our approach, only one line of business is taken into account. Copulae are therefore used to model dependences \emph{within claims} corresponding to that \emph{single line of business}. For sure, our approach can be generalized for several lines of business in the way that a~second level of dependence (for instance, modeled again by the copulae) is introduced between the claim amounts from different lines of business.

The structure of this paper is as follows: The claims reserving notation is summarized in Section~\ref{sec_claims}. In Section~\ref{sec_model}, a~generalized time series model for the conditional mean and variance of claim amounts is introduced. Section~\ref{sec_copula} elaborates copula approach for dependence modeling within the generalized time series model for claims triangles. Section~\ref{sec_est} covers estimation techniques for the parameters of the generalized time series model and copula as well. Consistency of the estimates is derived. Section~\ref{sec_prediction} concerns prediction of the actuarial claims reserves and, furthermore, estimation of their distribution. Finally, all the presented methods and approaches are applied on real data in Section~\ref{sec_data} in order to show their performance and outstanding benefits.

\section{Claims Reserving Notation}\label{sec_claims}
We introduce the classical claims reserving notation and terminology. Outstanding loss liabilities are structured in so-called claims development triangles, see Table~\ref{tab:run-off}. Let us denote $Y_{i,j}$ all the claim amounts up to development year $j\in\{1,\ldots,n\}$ with accident year $i\in\{1,\ldots,n\}$. Therefore, $Y_{i,j}$ stands for the \emph{cumulative claims} in accident year $i$ after $j$ development periods. The current year is $n$, which corresponds to the most recent accident year and development period as well. Hence, $Y_{i,j}$ is a~random variable of which we have an observation if $i+j<n+1$ (a~run-off triangle). That is, our data history consists of right-angled isosceles triangle $\{Y_{i,j}\}$, where $i=1,\ldots,n$ and $j=1,\ldots,n+1-i$. The diagonal elements $Y_{i,j}$, where $i+j$ is constant, correspond to the claim amounts in \emph{accounting year} $i+j$.
\begin{table}[!htb]
\renewcommand{\arraystretch}{1.1}
\begin{center}
\begin{tabular}{c|>{\centering}p{1.2cm}<{\centering}>{\centering}p{1.2cm}<{\centering}>{\centering}p{1.2cm}<{\centering}>{\centering}p{1.2cm}<{\centering}>{\centering}p{1.2cm}<{\centering}>{\centering}p{1.2cm}p{1.2cm}<{\centering}}
%\toprule
\hline
Accident & \multicolumn{7}{c}{Development year $j$} \\
%\cmidrule(r){2-9}
\cline{2-8}
year $i$ & \cellcolor[gray]{.6} $1$ & \cellcolor[gray]{.6} $2$ & \cellcolor[gray]{.6} & \cellcolor[gray]{.6} $\cdots$ & \cellcolor[gray]{.6} & \cellcolor[gray]{.6} $n-1$ & \cellcolor[gray]{.6} $n$ \\
%\midrule
\hline
\cellcolor[gray]{.6} $1$ & \cellcolor[gray]{.8} $Y_{1,1}$ & \cellcolor[gray]{.8} $Y_{1,2}$ & \cellcolor[gray]{.8} & \cellcolor[gray]{.8} $\cdots$ & \cellcolor[gray]{.8} & \cellcolor[gray]{.8} $Y_{1,n-1}$ & \cellcolor[gray]{.8} $Y_{1,n}$ \\
\cellcolor[gray]{.6} $2$ & \cellcolor[gray]{.8} $Y_{2,1}$ & \cellcolor[gray]{.8} $Y_{2,2}$ & \cellcolor[gray]{.8} & \cellcolor[gray]{.8} $\cdots$ & \cellcolor[gray]{.8} & \cellcolor[gray]{.8} $Y_{2,n-1}$ & \\
\cellcolor[gray]{.6} & \cellcolor[gray]{.8} & \cellcolor[gray]{.8} & \cellcolor[gray]{.8} $\ddots$ & \cellcolor[gray]{.8} & \cellcolor[gray]{.8} && \\
\cellcolor[gray]{.6} $\vdots$ & \cellcolor[gray]{.8} $\vdots$ & \cellcolor[gray]{.8} $\vdots$ & \cellcolor[gray]{.8} & \cellcolor[gray]{.8} $Y_{i,n+1-i}$ &&& \\
\cellcolor[gray]{.6} & \cellcolor[gray]{.8} & \cellcolor[gray]{.8} & \cellcolor[gray]{.8} &&&& \\
\cellcolor[gray]{.6} $n-1$ & \cellcolor[gray]{.8} $Y_{n-1,1}$ & \cellcolor[gray]{.8} $Y_{n-1,2}$ &&&&& \\
\cellcolor[gray]{.6} $n$ & \cellcolor[gray]{.8} $Y_{n,1}$ &&&&&& \\
%\bottomrule
\hline
\end{tabular}
\end{center}
\caption{Run-off triangle for cumulative claim amounts $Y_{i,j}$.}
\label{tab:run-off}
\renewcommand{\arraystretch}{1.0}
\end{table}

The aim is to estimate the ultimate claims amount $Y_{i,n}$ and the outstanding \emph{claims reserve} $R_i^{(n)}=Y_{i,n}-Y_{i,n+1-i}$ for all $i=2,\dots,n$. Additional to that, it is needed to \emph{estimate the whole distribution of the reserves} in order to provide important distributional quantities for the \emph{Solvency II purposes}, e.g., quantiles for the value at risk calculation.

\section{Conditional Mean and Variance Model}\label{sec_model}
Run-off triangles are comprised by observations which are ordered in time. It is therefore natural to suspect the observations to be dependent. On one hand, the most natural approach is to assume that the observations of a~common accident year are dependent. On the other hand, observations of different accident years are supposed to be independent. This assumption is similar to those of the Mack's chain ladder model, cf.~\cite{mack}.

$\mathcal{F}_{i,j}$ denotes the information set generated by trapezoid $\{Y_{k,l}:l\leq j,k\leq i+1-j\}$, i.e., $\mathcal{F}_{i,j}=\sigma(Y_{k,l}:l\leq j,k\leq i+1-j)$ is a~filtration corresponding to the smallest $\sigma$-algebra containing historical claims with at most $j$ development periods paid in accounting period $i$ or earlier. This notation allows for zero or even negative index in filtration despite the fact that the claims corresponding to zero or negative development of accident years are not observed.

Let us define a~\emph{nonlinear generalized semiparametric regression} type of model. It can be considered as a~generalization of the model proposed by~\cite{BBMW2006}. The first level of generalization is in the mean and variance structure, which was inspired by~\cite{Patton2012}. The second level of generalization regarding the dependence structure will be introduced in the next Section~\ref{sec_copula}.

\begin{definition}[CMV~model]\label{def:CMVmodel}
The Conditional Mean and Variance (CMV) model assumes
\begin{equation}\label{CMV}
Y_{i,j}=\mu(Y_{i,j-1},\balpha,j)+\sigma(Y_{i,j-1},\bbeta,j)\eps_{i,j}(\balpha,\bbeta),
\end{equation}
where $\balpha$ and $\bbeta$ are unknown parameters, which dimensions do not depend on $n$, $\mu$ is a~continuous function in~$\balpha$ and $\sigma$ is a~positive and continuous function in~$\bbeta$. Disturbances $\{\eps_{i,j}(\balpha,\bbeta)\}_{j=1}^{n+1-i}$ are independent sample copies of a~stationary first-order Markov process for all $i$. All $\eps_{i,j}(\balpha,\bbeta)$ have the common true invariant distribution $G_{\balpha,\bbeta}$ which is absolutely continuous with respect to Lebesgue measure on the real line. Suppose that
\begin{subequations}\label{eps-ass}
\begin{align}
\E[\eps_{i,j}(\balpha,\bbeta)|\mathcal{F}_{i,j-1}]&=0,\\
\var[\eps_{i,j}(\balpha,\bbeta)|\mathcal{F}_{i,j-1}]&=s(\balpha,\bbeta),
\end{align}
\end{subequations}
for all $i$ and $j$. Moreover for the unknown true values $[\balpha^{*\top},\bbeta^{*\top}]^{\top}$ of parameters $[\balpha^{\top},\bbeta^{\top}]^{\top}$, the conditional variance of errors equals one due to identifiability purposes, i.e., $s(\balpha^*,\bbeta^*)=1$.
\end{definition}

The name of the model comes from the fact that the conditional mean and variance can be expressed as
\begin{align*}
\E[Y_{i,j}|\mathcal{F}_{i,j-1}]&=\mu(Y_{i,j-1},\balpha,j),\\
\var[Y_{i,j}|\mathcal{F}_{i,j-1}]&=\sigma^2(Y_{i,j-1},\bbeta,j)s(\balpha,\bbeta).
\end{align*}
This property allows for a~wide variety of models for the conditional mean: types of ARMA models, vector autoregressions, linear and nonlinear regressions, and others. It also allows for a variety of models for the conditional variance: ARCH and any of its numerous parametric extensions (GARCH, EGARCH, GJR-GARCH, etc., see \cite{Bollerslev2010}), stochastic volatility models, and others.

\cite{Patton2012} considered a~similar model, but the dependence was assumed in a~different way. I.e., dependent copies of the time series (dependence between rows) were supposed, not dependent errors within each time series as we propose. Here, independent rows of errors $[\eps_{i,1}(\balpha,\bbeta),\ldots,\eps_{i,n+1-i}(\balpha,\bbeta)]$ imply independent rows of claims $[Y_{i,1},\ldots,Y_{i,n+1-i}]$. Moreover, the unconditional mean and variance of the CMV model's errors equal the conditional ones: $\E\eps_{i,j}(\balpha,\bbeta)=0$ and $\var\eps_{i,j}(\balpha,\bbeta)=s(\balpha,\bbeta)$.

To provide an~insight into possible candidates for the mean function $\mu$ and variance function $\sigma$, one may propose $\mu(Y_{i,j-1},\balpha,j)=\eta(\balpha,j)Y_{i,j-1}$ and $\sigma(Y_{i,j-1},\bbeta,j)=\nu(\bbeta,j)\sqrt{Y_{i,j-1}}$ or $\sigma(Y_{i,j-1},\bbeta,j)=\nu(\bbeta,j)Y_{i,j-1}$. \cite{Sherman1984} investigated decays $\eta(\balpha,j)$, which should correspond to the link ratios. Hence, $\eta(\balpha,j)$ should be decreasing in $j$ with limit $1$ as $j$ tends to infinity: $1+\alpha_1\exp\{-\alpha_2 j\}$, $1+\alpha_1\exp\{-j^{\alpha_2}\}$, $1+\alpha_1 j^{-\alpha_2}$, $1+\alpha_1(\alpha_2+j)^{-\alpha_3}$, $1+\alpha_1^{-\alpha_2^j}$, $\alpha_1^{\alpha_2^{-j}}$, $(1-\exp\{-\alpha_1 j^{\alpha_2}\})^{-1}$, $1+\alpha_1\alpha_2j^{-1-\alpha_2}\exp\{\alpha_1j^{-\alpha_2}\}$, $\exp\{\alpha_1j^{-\alpha_2}\}$, $1+\alpha_1/(j+\alpha_2)$, $1+\alpha_1/\log(j+\alpha_2)$, etc., where $\alpha_1,\alpha_2,\alpha_3>0$. On the other hand, decay $\nu(\bbeta,j)$ should be decreasing in $j$ with limit $0$: $\beta_1\exp\{-\beta_2j\}$, $\beta_1j^{-\beta_2}$, $\beta_1\log j\exp\{-\beta_2j\}$, $\beta_1j^{-\beta_2}\exp\{-\beta_3j\}$, $\beta_1\exp\{-\beta_2j^2\}$, $\beta_1/(j+\beta_2)$, $\beta_1/\log(j+\beta_2)$, $\beta_1j^{-\beta_2}\exp\{-\beta_3j^2\}$, etc., where $\beta_1,\beta_2,\beta_3>0$.

In actuarial praxis, these decays are quite often used, mainly for \emph{projecting the development} (forecasting of the claim amounts after $n$ development periods). Despite of that, the parameters of the decay curves are not estimated directly from the triangle, but the chain ladder estimates $\widehat{f}_j$ and $\widehat{\sigma}^2_j$ of the development factors $f_j$ and the nuisance variance parameters $\sigma_j^2$ \citep{mack} are smoothed and used for the decay parameters estimation. This two-step procedure does not assure that the estimated decay parameters will be at least asymptotically unbiased. Unlike that, we will estimate the parameters \emph{directly from the data triangle} and prove the estimates' consistency.

\cite{Sherman1984} assumed independence of individual link ratios (development factors) when estimating the decay parameters by parametric curve fitting. We relax the independence assumption and model the link ratios conditionally having dependent errors.

When comparing the CMV model with the model investigated by~\cite{BBMW2006}, two main differences arise. The CMV model allows for dependent errors and assumes finite number of parameters not depending on number of development periods $n$. Indeed, the CMV model requires known functions with unknown finite dimensional parameters. Parameters of the time series model from~\cite{BBMW2006} are $\{f_j\}_{j=1}^n$ and $\{\sigma_j\}_{j=1}^n$, which play the role of $\eta(\balpha,j)$ and $\nu(\bbeta,j)$, respectively. It is important to note that the classical stochastic inference is not valid in the setup, when number of parameters depends on the number of observation. Thus, legitimacy of the bootstrap procedure in that case is questionable.

Furthermore in the chain ladder, the estimate for $f_{n-1}$ is just a~pure ratio of two random variables and, moreover to estimate $\sigma^2_{n-1}$, only doubtful ad-hoc estimates were proposed due to the fact that the claims triangle simply does not contain data for a~reasonable estimate (e.g., a~consistent one).

%N/N=Cauchy

%reserving software~\cite{ResQ}

\section{Dependence Modeling by Copulae}\label{sec_copula}
Since the mean and variance trend are removed by the CMV model, the rest of the relationship among claim amounts can be additionally captured by modeling dependent errors. The inspiration for the dependence structure was taken from~\cite{ChenFan2006a}.

\begin{paragraph}{Assumption~C}
$\{\eps_{i,j}(\balpha,\bbeta)\}_{j=1}^{n+1-i}$ are independent sample copies of a~stationary first-order Markov process for all $i$ generated from $(G_{\balpha,\bbeta}(\cdot),C(\cdot,\cdot;\bgamma))$, where $C(\cdot,\cdot;\bgamma)$ is the true parametric copula for $[\eps_{i,j-1}(\balpha,\bbeta),\eps_{i,j}(\balpha,\bbeta)]$, which is given and fixed up to unknown parameter $\bgamma$ and is absolutely continuous with respect to Lebesgue measure on $[0,1]^2$.%, and is neither the Fr\'{e}chet-Hoeffding upper nor lower bound.
\end{paragraph}

%If the true copula $C$ coincides with the Fr\'{e}chet-Hoeffding upper or lower bound, then the resulting time series is deterministic and under stationarity, $Y_{i,j}=Y_{i,j-1}$ for the upper band and $Y_{i,j}=G_{\balpha,\bbeta}^{-1}(1-G_{\balpha,\bbeta}(Y_{i,j-1}))$ for the lower bound.

It is believed that there exist a~kind of \emph{information overlap} between the claims from consecutive development periods, which corresponds to the dependence between the CMV model's errors modeled by copulae.

Assumption~C together with the CMV model yield a~copula-based model, where the joint bivariate distribution of errors $[\eps_{i,j-1}(\balpha,\bbeta),\eps_{i,j}(\balpha,\bbeta)]$ has the following distribution function
\[
H(e_1,e_2)=C(G_{\balpha,\bbeta}(e_1),G_{\balpha,\bbeta}(e_2);\bgamma).
\]
Then, the conditional copula density can be derived as
\begin{equation}\label{copula-cond}
h(e_2|e_1)=g_{\balpha,\bbeta}(e_2)c(G_{\balpha,\bbeta}(e_1),G_{\balpha,\bbeta}(e_2);\bgamma),
\end{equation}
where $c$ is the copula density a $g_{\balpha,\bbeta}$ is the marginal density corresponding to the univariate distribution function $G_{\balpha,\bbeta}$. The latter relation~\eqref{copula-cond} will play an~important role in ``making" the dependent errors \emph{conditionally independent} during the forthcoming estimation and prediction process.

\section{Parameter Estimation}\label{sec_est}
The CMV model from Definition~\ref{def:CMVmodel} together with the copula Assumption~C contain three vector parameters, which need to be estimated. The estimation process consists of two stages. In the first one, mean and variance parameters $\balpha$ and $\bbeta$ are estimated in a~distribution-free fashion, since no specific distributional assumptions are proposed nor required for the claims. The second stage concerns estimation of the dependence structure, mainly the copula parameter $\bgamma$, in a~likelihood based way.

\subsection{Estimation in CMV Model}
Since the CMV model is defined in a~conditional style, \emph{conditional least squares} (CLS) of the sample centered conditional moments of the claims are minimized in order to obtain estimates of the CMV model parameters.

\begin{definition}[Conditional least squares estimates]\label{def:CLSest}
Let us denote
\[
M_n(\balpha,\bbeta)=\frac{1}{n-1}\sum_{j=2}^{n}\frac{1}{n+1-j}\sum_{i=1}^{n+1-j}\frac{\left[Y_{i,j}-\mu(Y_{i,j-1},\balpha,j)\right]^2}{\sigma^2(Y_{i,j-1},\bbeta,j)}
\]
and
\begin{multline*}
V_n(\balpha,\bbeta)\\=\frac{1}{n-1}\sum_{j=2}^{n}\frac{1}{n+1-j}\sum_{i=1}^{n+1-j}\left\{\left[Y_{i,j}-\mu(Y_{i,j-1},\balpha,j)\right]^2-\sigma^2(Y_{i,j-1},\bbeta,j)\right\}^2,
\end{multline*}
where parameters $\balpha$ and $\bbeta$ belong to parameter spaces $\btheta_1$ and $\btheta_2$. The conditional least squares estimate of the mean parameter $\balpha$ for a~fixed value of parameter $\bbeta\in\btheta_2$ is defined as
\[
\widehat{\balpha}(\bbeta)=\arg\min_{\balpha\in\btheta_1}M_n(\balpha,\bbeta)
\]
and the conditional least squares estimate of the variance parameter $\bbeta$ for a~fixed value of parameter $\balpha\in\btheta_1$ is defined as
\[
\widehat{\bbeta}(\balpha)=\arg\min_{\bbeta\in\btheta_2}V_n(\balpha,\bbeta).
\]
\end{definition}

The reason, why the parameter estimates for the CMV model are defined as above, lies in the fact that it is computationally not feasible to find the global minimum of $M_n$ and $V_n$ with respect to $[\balpha^{\top},\bbeta^{\top}]^{\top}$ simultaneously.

The forthcoming theory (Theorems~\ref{thm:CLSconsistencyM}, \ref{thm:CLSconsistencyV}, and Corollary~\ref{col:consistency}) assures that the CLS estimates are reasonable and, moreover, consequent Algorithm~\ref{alg:icls} provides a~computational way for obtaining CMV parameter estimates.

\begin{theorem}[Conditional least squares consistency for the mean]\label{thm:CLSconsistencyM}
Let Model CMV hold and $\bbeta\in\btheta_2$ be fixed. Assume that
\begin{enumerate}
\item $\left\{\left[Y_{i,j}-\mu(Y_{i,j-1},\balpha,j)\right]^2/\sigma^2(Y_{i,j-1},\bbeta,j)\right\}_{i,j\in\mathbbm{N}}$ is uniformly integrable,
\item for all $\balpha,\balpha'\in\btheta_1$ and $n\in\mathbbm{N}$,
\begin{equation}\label{cond:lipschitz1}
\left|\left[Y_{i,j}-\mu(Y_{i,j-1},\balpha,j)\right]^2-\left[Y_{i,j}-\mu(Y_{i,j-1},\balpha',j)\right]^2\right|
\stackrel{a.s.}{\leq} C_j g(\|\balpha-\balpha'\|),
\end{equation}
where $\{C_j\}_{j\in\mathbbm{N}}$ is a~stochastic sequence not depending on~$\balpha$ such that $C_j=\mathcal{O}_{\P}(\sigma^2(Y_{i,j-1},\bbeta,j))$, $j\to\infty$ for all $i\in\mathbbm{N}$ and $g$ is nonstochastic such that $g(t)\downarrow 0$ as $t\downarrow 0$,
\item $s(\cdot,\bbeta)$ is a~Lipschitz function on the compact parameter space $\btheta_1$ such that the true unknown parameter $\balpha^*(\bbeta)$ is its unique global minimum.
\end{enumerate}
Then $\widehat{\balpha}(\bbeta)\xrightarrow[n\to\infty]{\P}\balpha^*(\bbeta)$.
\end{theorem}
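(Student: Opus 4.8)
The plan is to establish consistency via the standard M-estimation route: show that the random objective function $M_n(\balpha,\bbeta)$ converges in probability, uniformly in $\balpha\in\btheta_1$, to a nonstochastic limit function whose unique minimizer is $\balpha^*(\bbeta)$, and then invoke the argmax/argmin continuous-mapping principle on the compact set $\btheta_1$. Concretely, fix $\bbeta\in\btheta_2$ throughout. The first step is pointwise convergence: for each fixed $\balpha$, I would compute the limit of $\E M_n(\balpha,\bbeta)$ and then upgrade to convergence in probability. Write $Z_{i,j}(\balpha)=[Y_{i,j}-\mu(Y_{i,j-1},\balpha,j)]^2/\sigma^2(Y_{i,j-1},\bbeta,j)$. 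Using the CMV structure $Y_{i,j}=\mu(Y_{i,j-1},\balpha^*,j)+\sigma(Y_{i,j-1},\bbeta^*,j)\eps_{i,j}$ together with the conditioning identities \eqref{eps-ass} and the tower property, $\E Z_{i,j}(\balpha)$ reduces — after adding and subtracting $\mu(Y_{i,j-1},\balpha^*,j)$ inside the square and taking $\E[\cdot\mid\mathcal F_{i,j-1}]$ — to $\E[(\mu(Y_{i,j-1},\balpha^*,j)-\mu(Y_{i,j-1},\balpha,j))^2/\sigma^2(Y_{i,j-1},\bbeta,j)] + $ a variance term involving $\sigma^2(Y_{i,j-1},\bbeta^*,j)/\sigma^2(Y_{i,j-1},\bbeta,j)$; this is exactly the quantity that assumption~(iii) packages as $s(\balpha,\bbeta)$ with $s(\balpha^*(\bbeta),\bbeta)$ the unique minimum. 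Since rows are independent and the double average is a genuine mean over the triangle, uniform integrability (assumption~(i)) gives an $L_1$-type law of large numbers, so $M_n(\balpha,\bbeta)\xrightarrow{\P} s(\balpha,\bbeta)$ pointwise; this is essentially a triangular-array weak LLN argument, where independence across $i$ handles the within-column averaging and the outer average over $j$ is then a Cesàro sum of terms converging to their (eventually stationary in $j$) means.

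Second, I would promote pointwise convergence to uniform convergence over $\balpha\in\btheta_1$. This is where the Lipschitz-type bound \eqref{cond:lipschitz1} in assumption~(ii) does the work: it controls $|Z_{i,j}(\balpha)-Z_{i,j}(\balpha')|$ by $(C_j/\sigma^2(Y_{i,j-1},\bbeta,j))\,g(\|\balpha-\balpha'\|)$, and the side condition $C_j=\mathcal O_{\P}(\sigma^2(Y_{i,j-1},\bbeta,j))$ keeps the ratio stochastically bounded, so the increments of $M_n(\cdot,\bbeta)$ are dominated by $\overline{C}_n\,g(\|\balpha-\balpha'\|)$ for a tight sequence $\overline C_n$. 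Combined with compactness of $\btheta_1$ (so a finite $\delta$-net exists) and the modulus-of-continuity property $g(t)\downarrow 0$, a standard chaining/net argument converts pointwise convergence on the net into uniform convergence: $\sup_{\balpha\in\btheta_1}|M_n(\balpha,\bbeta)-s(\balpha,\bbeta)|\xrightarrow{\P}0$. One also needs $s(\cdot,\bbeta)$ itself to be continuous on $\btheta_1$, which is part of assumption~(iii). Finally, the argmin theorem: uniform convergence plus a unique well-separated minimizer $\balpha^*(\bbeta)$ on a compact set forces $\widehat\balpha(\bbeta)\xrightarrow{\P}\balpha^*(\bbeta)$ — if not, along a subsequence $\widehat\balpha(\bbeta)$ stays a fixed distance from $\balpha^*(\bbeta)$, hence $s(\widehat\balpha(\bbeta),\bbeta)\geq s(\balpha^*(\bbeta),\bbeta)+\eta$ by the separation, while $M_n(\widehat\balpha(\bbeta),\bbeta)\leq M_n(\balpha^*(\bbeta),\bbeta)\to s(\balpha^*(\bbeta),\bbeta)$, a contradiction once the uniform error drops below $\eta/2$.

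The main obstacle I anticipate is the uniform LLN over the triangular array — specifically, making the weak law of large numbers rigorous for the non-i.i.d.\ double-indexed array $\{Z_{i,j}(\balpha)\}$ under only uniform integrability, where within each column $j$ the summands are i.i.d.\ copies (rows independent) but the column distributions vary with $j$, and then showing the Lipschitz envelope $\overline C_n$ is genuinely tight uniformly in $n$ so that the equicontinuity step goes through. Care is also needed because the bound in \eqref{cond:lipschitz1} is stated with a single $i$ and an $\mathcal O_{\P}$ in $j$, so one has to argue that the aggregated envelope $\frac{1}{n-1}\sum_j \frac{1}{n+1-j}\sum_i C_j/\sigma^2(Y_{i,j-1},\bbeta,j)$ remains bounded in probability; the identification step (ii)–(iii), by contrast, is essentially bookkeeping once the conditioning identities are applied. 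Everything else — continuity, compactness, the argmin step — is routine.
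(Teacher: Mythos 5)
Your overall architecture --- pointwise WLLN, then uniform convergence over $\btheta_1$ via the Lipschitz bound \eqref{cond:lipschitz1}, then the argmin step on a compact set with a unique minimizer --- matches the paper's proof (which gets stochastic equicontinuity and the weak uniform LLN from Davidson's Theorems 21.10 and 21.9 and concludes with Bierens' Theorem 4.2.1, where you use a finite-net/chaining argument and a separation argument; these parts are essentially equivalent and fine). The genuine gap is in the pointwise law of large numbers, which you yourself flag as the main obstacle but for which your proposed mechanism does not work. You want to average within each column $j$ using independence across accident years $i$, and then treat the outer average over $j$ as a Ces\`{a}ro sum of terms converging to their means. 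But in a run-off triangle column $j$ contains only $n+1-j$ entries, so for columns near the diagonal there is essentially no within-column averaging (column $j=n$ has a single entry); and the column averages are not independent across $j$, since every row contributes to many columns and within a row the errors are first-order Markov dependent, so neither an independent-array WLLN nor a Ces\`{a}ro argument applies to the outer sum. The paper's proof supplies exactly the missing device: by the CMV assumption $\var[\eps_{i,j}(\balpha,\bbeta)\,|\,\mathcal{F}_{i,j-1}]=s(\balpha,\bbeta)$, the $j$-th summand $A_{n,j}$ of $M_n$ has the \emph{deterministic} conditional mean $\E[A_{n,j}|\mathcal{F}_{n,j-1}]=s(\balpha,\bbeta)/(n-1)$, hence $\{A_{n,j}-\E A_{n,j}\}$ is a martingale difference array in $j$; uniform integrability (assumption (i)) then allows the $\mathsf{L}_1$ law of large numbers for martingale difference arrays (Davidson, Theorem 19.7), giving $M_n(\balpha,\bbeta)\xrightarrow{\P}s(\balpha,\bbeta)$ pointwise with the within-row dependence and the unequal column sizes handled automatically. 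Without this (or an equivalent) argument, your pointwise step fails.

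A secondary, smaller point: you identify the limit by expanding $Y_{i,j}$ around the true parameters and assert that assumption (iii) ``packages'' the resulting bias-plus-variance expression as $s(\balpha,\bbeta)$. In the paper, $s(\balpha,\bbeta)$ is \emph{defined} in the CMV model \eqref{eps-ass} as the conditional variance of $\eps_{i,j}(\balpha,\bbeta)$, so the limit identification is immediate from that assumption rather than from (iii); your expansion route would additionally have to justify why the conditional expression you obtain is nonrandom, which is precisely what the model assumption grants for free.
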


\begin{proof}
Let us define
\[
A_{n,j}:=\frac{1}{(n-1)(n+1-j)}\sum_{i=1}^{n+1-j}\frac{\left[Y_{i,j}-\mu(Y_{i,j-1},\balpha,j)\right]^2}{\sigma^2(Y_{i,j-1},\bbeta,j)}.
\]
Since $\E[A_{n,j}|\mathcal{F}_{n,j-1}]=\frac{1}{n-1}s(\balpha,\bbeta)$, then $A_{n,j}-\E[A_{n,j}|\mathcal{F}_{n,j-1}]=A_{n,j}-\E A_{n,j}$ is a~martingale difference array with respect to filtration $\mathcal{F}_{n,j}$. Moreover, if $\left[Y_{i,j}-\mu(Y_{i,j-1},\balpha,j)\right]^2/\sigma^2(Y_{i,j-1},\bbeta,j)$ is uniformly integrable, then $(n-1)A_{n,j}$ is uniformly integrable as well. This allows to apply the $\mathsf{L}_1$~law of large numbers for the martingale difference arrays \citep[Theorem~19.7]{Davidson1994}. Hence, $\sum_{j=2}^{n}[A_{n,j}-\frac{1}{n-1}s(\balpha,\bbeta)]\xrightarrow[n\to\infty]{\mathsf{L}_1}0$, which implies $\sum_{j=2}^{n}A_{n,j}\xrightarrow[n\to\infty]{\P}s(\balpha,\bbeta)$. According to the definition of $A_{n,j}$, we have obtained the weak law of large numbers (WLLN) for $M_n(\balpha,\bbeta)$ for all $\balpha\in\btheta_1$ and $\bbeta\in\btheta_2$ (pointwise).

Assumption~\eqref{cond:lipschitz1} gives a~Lipschitz-type condition for $M_n(\cdot,\bbeta)$, i.e.,
\[
\left|M_n(\balpha,\bbeta)-M_n(\balpha',\bbeta)\right|\leq K_n g(\|\balpha-\balpha'\|),\quad a.s.
\]
for all $\balpha,\balpha'$ and $K_n=\mathcal{O}_{\P}(1),\,n\to\infty$. Combining this fact with $s(\cdot,\bbeta)$ being Lipschitz, then $\{M_n(\cdot,\bbeta)-s(\cdot,\bbeta)\}_{n\in\mathbbm{N}}$ is stochastically equicontinuous for all $\bbeta$ by \citet[Theorem~21.10]{Davidson1994}. Furthermore, Theorem~21.9 by \cite{Davidson1994} provides the weak uniform law of large numbers (WULLN) for $M_n(\balpha,\bbeta)$ in $\balpha$ for all $\bbeta\in\btheta_2$, i.e.,
\[
\sup_{\balpha\in\btheta_1}|M_n(\balpha,\bbeta)-s(\balpha,\bbeta)|\xrightarrow[n\to\infty]{\P}0
\]
for all $\bbeta\in\btheta_2$.

Taking into account that continuous (or, moreover, Lipschitz) functions reach their global extremes on a~compact set, Theorem~4.2.1 by \cite{Bierens1994} yields the desired weak consistency of the $\balpha$ parameter.
\end{proof}

Weak consistency (in probability) of mean parameter estimate is shown, but also the strong version (almost sure convergence) can be provided. It would require $C_j$ to be bounded almost surely, which is less feasible.

Lipschitz kind of assumption~(ii) can be replaced by a~stronger one: uniform equiboundedness in probability. In that case, it suffices to assume $\E\sup_{\balpha\in\btheta_1}\|\nabla_{\balpha}M_n(\balpha,\bbeta)\|<\Delta_1$ for all $n$ and $\bbeta$ and convexity of the compact parameter space $\btheta_1$ for applying the stochastic mean-value theorem. The compactness of the parameter space can even be relaxed to its total boundedness.

%VAHY SU TAM, aby sa pripadne dala dosiahnut mixingalova vlastnost

Similar theorem as above is going to be postulated for the CLS variance parameter estimate to ensure its appropriateness. Firstly, let us define
\begin{multline*}
B_{n,j}(\balpha,\bbeta)\\:=\frac{1}{(n-1)(n+1-j)}\sum_{i=1}^{n+1-j}\left\{\left[Y_{i,j}-\mu(Y_{i,j-1},\balpha,j)\right]^2-\sigma^2(Y_{i,j-1},\bbeta,j)\right\}^2
\end{multline*}
and
\[
v(\balpha,\bbeta):=\lim_{n\to\infty}\sum_{j=2}^{n}\E B_{n,j}(\balpha,\bbeta).
\]

\begin{theorem}[Conditional least squares consistency for the variance]\label{thm:CLSconsistencyV}
Let Model CMV hold and $\balpha\in\btheta_1$ be fixed. Assume that
\begin{enumerate}
\item random array
\begin{multline*}
\left\{\left\{\left[Y_{i,j}-\mu(Y_{i,j-1},\balpha,j)\right]^2-\sigma^2(Y_{i,j-1},\bbeta,j)\right\}^2\right.\\
\left.-\E\left\{\left[Y_{i,j}-\mu(Y_{i,j-1},\balpha,j)\right]^2-\sigma^2(Y_{i,j-1},\bbeta,j)\right\}^2\right\}_{i,j\in\mathbbm{N}}
\end{multline*}
is uniformly integrable,
\item for all $\bbeta,\bbeta'\in\btheta_2$ and $n\in\mathbbm{N}$,
\begin{multline}\label{cond:lipschitz2}
\left|\left\{\left[Y_{i,j}-\mu(Y_{i,j-1},\balpha,j)\right]^2-\sigma^2(Y_{i,j-1},\bbeta,j)\right\}^2\right.\\
\left.-\left\{\left[Y_{i,j}-\mu(Y_{i,j-1},\balpha,j)\right]^2-\sigma^2(Y_{i,j-1},\bbeta',j)\right\}^2\right|\stackrel{a.s.}{\leq} D_j h(\|\bbeta-\bbeta'\|),
\end{multline}
where $\{D_j\}_{j\in\mathbbm{N}}$ is a~stochastic sequence not depending on~$\bbeta$ such that $D_j=\mathcal{O}_{\P}(1),\,j\to\infty$ for all $i\in\mathbbm{N}$ and $h$ is nonstochastic such that $h(t)\downarrow 0$ as $t\downarrow 0$,
\item for all $j\leq n$, $n\in\mathbbm{N}$, and $m\in\mathbbm{N}_0$,
\begin{equation}\label{cond:mixingale}
\E\left|\E\left[B_{n,j}(\balpha,\bbeta)-\E B_{n,j}(\balpha,\bbeta)|\mathcal{F}_{n,j-m}\right]\right|\leq c_{n,j}d_m,
\end{equation}
where $\{c_{n,j}\}_{j\leq n,n\in\mathbbm{N}}$ and $\{d_m\}_{m\in\mathbbm{N}_0}$ are constants such that $d_m\downarrow 0$ as $m\to\infty$,
\item $v(\balpha,\cdot)$ is a~Lipschitz function on the compact parameter space $\btheta_2$ such that the true unknown parameter $\bbeta^*(\balpha)$ is its unique global minimum.
\end{enumerate}
Then $\widehat{\bbeta}(\balpha)\xrightarrow[n\to\infty]{\P}\bbeta^*(\balpha)$.
\end{theorem}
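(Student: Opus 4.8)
The plan is to follow exactly the three-stage architecture used in the proof of Theorem~\ref{thm:CLSconsistencyM}: first a pointwise weak law of large numbers for $V_n(\balpha,\bbeta)$ in $\bbeta$ for the fixed $\balpha$; then stochastic equicontinuity of $\{V_n(\balpha,\cdot)-v(\balpha,\cdot)\}_n$ extracted from the Lipschitz-type bound \eqref{cond:lipschitz2}, which upgrades the pointwise law to a weak uniform law; and finally an appeal to an extremum-estimator consistency theorem. The essential structural difference from the mean case is that $B_{n,j}(\balpha,\bbeta)-\E B_{n,j}(\balpha,\bbeta)$ is \emph{not} a martingale difference array with respect to $\{\mathcal{F}_{n,j}\}$: although the inner average over $i$ is over independent rows, the summands are genuinely dependent across development periods $j$ because the errors $\eps_{i,j}(\balpha,\bbeta)$ form a first-order Markov chain and because $Y_{i,j-1}$ enters $\sigma^2(Y_{i,j-1},\bbeta,j)$, so $\E[B_{n,j}\mid\mathcal{F}_{n,j-1}]$ is random. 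Consequently the $\mathsf{L}_1$ martingale LLN is unavailable and must be replaced by an $\mathsf{L}_1$ mixingale LLN, which is precisely what Assumption~\eqref{cond:mixingale} is tailored to feed.

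For the pointwise law I would first observe that $B_{n,j}(\balpha,\bbeta)$ is $\mathcal{F}_{n,j}$-measurable, so that the ``forward'' mixingale inequality holds trivially (the array is an adapted mixingale), while \eqref{cond:mixingale} supplies the ``backward'' bound $\E\bigl|\E[B_{n,j}-\E B_{n,j}\mid\mathcal{F}_{n,j-m}]\bigr|\le c_{n,j}d_m$ with $d_m\downarrow0$; hence $\{B_{n,j}(\balpha,\bbeta)-\E B_{n,j}(\balpha,\bbeta)\}_{j=2}^{n}$ is an $\mathsf{L}_1$-mixingale array with respect to $\{\mathcal{F}_{n,j}\}$. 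Assumption~(i) makes $(n-1)B_{n,j}(\balpha,\bbeta)$ uniformly integrable, so the $\mathsf{L}_1$ law of large numbers for $\mathsf{L}_1$-mixingales (cf.\ \citet[Theorem~19.11]{Davidson1994}) applies and yields $\sum_{j=2}^{n}\bigl[B_{n,j}(\balpha,\bbeta)-\E B_{n,j}(\balpha,\bbeta)\bigr]\xrightarrow[n\to\infty]{\mathsf{L}_1}0$. Since $V_n(\balpha,\bbeta)=\sum_{j=2}^{n}B_{n,j}(\balpha,\bbeta)$ and $v(\balpha,\bbeta)=\lim_n\sum_{j=2}^{n}\E B_{n,j}(\balpha,\bbeta)$, this gives the pointwise convergence $V_n(\balpha,\bbeta)\xrightarrow[n\to\infty]{\P}v(\balpha,\bbeta)$ for every $\bbeta\in\btheta_2$.

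The remaining steps parallel the mean-case argument. Averaging \eqref{cond:lipschitz2} over $i$ and $j$ produces $|V_n(\balpha,\bbeta)-V_n(\balpha,\bbeta')|\le L_n\,h(\|\bbeta-\bbeta'\|)$ a.s.\ with $L_n=\mathcal{O}_{\P}(1)$; combined with $v(\balpha,\cdot)$ being Lipschitz on the compact $\btheta_2$ (Assumption~(iv)), this makes $\{V_n(\balpha,\cdot)-v(\balpha,\cdot)\}_n$ stochastically equicontinuous by \citet[Theorem~21.10]{Davidson1994}. Then \citet[Theorem~21.9]{Davidson1994} promotes the pointwise law to the weak uniform law $\sup_{\bbeta\in\btheta_2}|V_n(\balpha,\bbeta)-v(\balpha,\bbeta)|\xrightarrow[n\to\infty]{\P}0$. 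Finally, since $V_n(\balpha,\cdot)$ and $v(\balpha,\cdot)$ attain their minima on the compact $\btheta_2$ and $\bbeta^*(\balpha)$ is, by Assumption~(iv), the unique global minimizer of $v(\balpha,\cdot)$, Theorem~4.2.1 of \cite{Bierens1994} delivers $\widehat{\bbeta}(\balpha)\xrightarrow[n\to\infty]{\P}\bbeta^*(\balpha)$.

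I expect the mixingale step to be the only genuine obstacle. One must verify carefully that the array meets the hypotheses of Davidson's $\mathsf{L}_1$ mixingale LLN — in particular the interplay between the mixingale constants $c_{n,j}$, the decay rate of $d_m$, and the $1/(n-1)$ normalization already built into $B_{n,j}$, together with the uniform integrability of Assumption~(i) — so that the law is actually applicable. The stochastic-equicontinuity and extremum-consistency parts are then essentially a rerun of the mean-case proof and raise no new difficulty.
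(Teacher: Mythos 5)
Your proposal is correct and follows essentially the same route as the paper's own proof: the $\mathsf{L}_1$-mixingale law of large numbers \citep[Theorem~19.11]{Davidson1994} fed by assumptions~(i) and~\eqref{cond:mixingale} for the pointwise WLLN of $V_n(\balpha,\cdot)$, then stochastic equicontinuity from \eqref{cond:lipschitz2} via \citet[Theorems~21.10 and~21.9]{Davidson1994} to get the WULLN, and finally Theorem~4.2.1 of \cite{Bierens1994} for consistency of $\widehat{\bbeta}(\balpha)$. Your explicit remark that the centered array fails to be a martingale difference array (so the mixingale LLN must replace the martingale LLN of the mean case) is exactly the point the paper's proof is built around.
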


\begin{proof}
The idea of this proof is similar as in the previous one. Note that $B_{n,j}(\balpha,\bbeta)-\E B_{n,j}(\balpha,\bbeta)$ is an~$\mathsf{L}_1$-mixingale array with respect to filtration $\mathcal{F}_{n,j}$ due to~\eqref{cond:mixingale}. Condition~(i) implies that $(n-1)[B_{n,j}(\balpha,\bbeta)-\E B_{n,j}(\balpha,\bbeta)]$ is uniformly integrable. This allows us to apply the $\mathsf{L}_1$~law of large numbers for the $\mathsf{L}_1$-mixingale arrays \citep[Theorem~19.11]{Davidson1994}. Hence, $\sum_{j=2}^{n}[B_{n,j}(\balpha,\bbeta)-\E B_{n,j}(\balpha,\bbeta)]\xrightarrow[n\to\infty]{\mathsf{L}_1}0$, which implies $\sum_{j=2}^{n}B_{n,j}(\balpha,\bbeta)\xrightarrow[n\to\infty]{\P}v(\balpha,\bbeta)$. According to the definition of $B_{n,j}(\balpha,\bbeta)$, we have obtained the pointwise WLLN for $V_n(\balpha,\bbeta)$ for all $\balpha\in\btheta_1$ and $\bbeta\in\btheta_2$.

Assumption~\eqref{cond:lipschitz2} gives a~Lipschitz-type condition for $V_n(\balpha,\cdot)$, i.e.,
\[
\left|V_n(\balpha,\bbeta)-V_n(\balpha,\bbeta')\right|\leq D'_n h(\|\bbeta-\bbeta'\|),\quad a.s.
\]
for all $\bbeta,\bbeta'$ and $D'_n=\mathcal{O}_{\P}(1),\,n\to\infty$. Combining this fact with $v(\balpha,\cdot)$ being Lipschitz, then $\{V_n(\balpha,\cdot)-v(\balpha,\cdot)\}_{n\in\mathbbm{N}}$ is stochastically equicontinuous for all $\balpha$ by \citet[Theorem~21.10]{Davidson1994}. Furthermore, Theorem~21.9 by \cite{Davidson1994} provides the WULLN for $V_n(\balpha,\bbeta)$ in $\bbeta$ for all $\balpha\in\btheta_1$, i.e.,
\[
\sup_{\bbeta\in\btheta_2}|V_n(\balpha,\bbeta)-v(\balpha,\bbeta)|\xrightarrow[n\to\infty]{\P}0
\]
for all $\balpha\in\btheta_1$. Theorem~4.2.1 by \cite{Bierens1994} yields the desired weak consistency of the $\bbeta$ parameter.
\end{proof}

A~natural question arises: What is the connection between the true unknown parameter values $\balpha^*$ and $\bbeta^*$ of the CMV model and true unknown parameter values $\balpha^*(\bbeta)$ and $\bbeta^*(\balpha)$ from Theorems~\ref{thm:CLSconsistencyM} and~\ref{thm:CLSconsistencyV}? The intuition behind the CMV model is that function $\mu$ should mimic the conditional mean of the claims and function $\sigma^2$ should model their conditional variance. Mathematically speaking, $\var[Y_{i,j}/\sigma(Y_{i,j-1},\bbeta,j)|\mathcal{F}_{i,j-1}]$ and, similarly, $\var[(Y_{i,j}-\mu(Y_{i,j-1},\balpha,j))^2|\mathcal{F}_{i,j-1}]$ should be as small as possible. Taking into account that the data triangle does not possess the same number of claim amounts entries for each development period $j$, it is reasonable to assume that if the CMV model holds, then both discrepancy measures
\begin{equation}\label{dm1}
\lim_{n\to\infty}\E\left\{\frac{1}{n-1}\sum_{j=2}^{n}\frac{1}{n+1-j}\sum_{i=1}^{n+1-j}\var\left[\frac{Y_{i,j}}{\sigma(Y_{i,j-1},\bbeta,j)}\Big|\mathcal{F}_{i,j-1}\right]\right\}
\end{equation}
and
\begin{equation}\label{dm2}
\lim_{n\to\infty}\E\left\{\frac{1}{n-1}\sum_{j=2}^{n}\frac{1}{n+1-j}\sum_{i=1}^{n+1-j}\var\left[\left(Y_{i,j}-\mu(Y_{i,j-1},\balpha,j)\right)^2|\mathcal{F}_{i,j-1}\right]\right\}
\end{equation}
reach their global minimum just at the same true unknown parameter values $\balpha^*$ and $\bbeta^*$ of the CMV model. However, measures~\eqref{dm1} and \eqref{dm2} are nothing else than $s(\balpha,\bbeta)$ and $v(\balpha,\bbeta)$. Now, let us define the interior of set~$\btheta$ by $\interior{\btheta}$.

\begin{corollary}[Consistency of the CLS estimates]\label{col:consistency}
Suppose that the assumptions of Theorems~\ref{thm:CLSconsistencyM} and~\ref{thm:CLSconsistencyV} hold. Let $s\in\mathcal{C}^2(\btheta_1\times\btheta_2)$, $v\in\mathcal{C}^2(\btheta_1\times\btheta_2)$, and both functions $s$ and $v$ have their unique global minimum on compact set $\btheta_1\times\btheta_2$ at $[\balpha^{*\top},\bbeta^{*\top}]^{\top}\in\interior{\btheta}_1\times\interior{\btheta}_2$. If $\det[\partial^2s(\balpha,\bbeta)/\partial\balpha\partial\balpha^{\top}]\neq 0$ and $\det[\partial^2v(\balpha,\bbeta)/\partial\bbeta\partial\bbeta^{\top}]\neq 0$ for all $[\balpha^{\top},\bbeta^{\top}]^{\top}\in\interior{\btheta}_1\times\interior{\btheta}_2$, then
\[
\left[\begin{array}{c}\widehat{\balpha}(\bbeta^*)\\\widehat{\bbeta}(\balpha^*)\end{array}\right]\xrightarrow[n\to\infty]{\P}\left[\begin{array}{c}\balpha^*\\\bbeta^*\end{array}\right].
\]

\end{corollary}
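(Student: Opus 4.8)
The idea is to combine the two one-dimensional consistency results (Theorems~\ref{thm:CLSconsistencyM} and~\ref{thm:CLSconsistencyV}) with a continuity/uniqueness argument that identifies the ``profiled'' minimizers $\balpha^*(\bbeta)$ and $\bbeta^*(\balpha)$ with the joint minimizers $\balpha^*$ and $\bbeta^*$ of $s$ and $v$ when the fixed value of the other parameter is plugged in at its true value. Concretely, I would first observe that by the discussion preceding the corollary, the population objective functions are $s(\balpha,\bbeta)$ (the limit of $M_n$) and $v(\balpha,\bbeta)$ (the limit of $V_n$), so that $\balpha^*(\bbeta)=\arg\min_{\balpha\in\btheta_1}s(\balpha,\bbeta)$ and $\bbeta^*(\balpha)=\arg\min_{\bbeta\in\btheta_2}v(\balpha,\bbeta)$. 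The key identification step is then: since $[\balpha^{*\top},\bbeta^{*\top}]^{\top}$ is the unique \emph{joint} global minimum of $s$ on $\btheta_1\times\btheta_2$, for the particular slice $\bbeta=\bbeta^*$ the function $\balpha\mapsto s(\balpha,\bbeta^*)$ must be minimized at $\balpha=\balpha^*$, and this minimum is unique (otherwise a second joint minimizer would exist); hence $\balpha^*(\bbeta^*)=\balpha^*$. Symmetrically, $\bbeta^*(\balpha^*)=\bbeta^*$ using uniqueness of the joint minimum of $v$.

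**Carrying it out.** With that identification in hand, I would apply Theorem~\ref{thm:CLSconsistencyM} with the fixed value $\bbeta=\bbeta^*\in\interior{\btheta}_2$: its hypotheses give $\widehat{\balpha}(\bbeta^*)\xrightarrow{\P}\balpha^*(\bbeta^*)=\balpha^*$. Likewise Theorem~\ref{thm:CLSconsistencyV} with $\balpha=\balpha^*\in\interior{\btheta}_1$ gives $\widehat{\bbeta}(\balpha^*)\xrightarrow{\P}\bbeta^*(\balpha^*)=\bbeta^*$. Stacking the two convergences and using that joint convergence in probability of a finite vector is equivalent to componentwise convergence yields
\[
\left[\begin{array}{c}\widehat{\balpha}(\bbeta^*)\\\widehat{\bbeta}(\balpha^*)\end{array}\right]\xrightarrow[n\to\infty]{\P}\left[\begin{array}{c}\balpha^*\\\bbeta^*\end{array}\right],
\]
which is the assertion. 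The roles of the extra smoothness ($s,v\in\mathcal{C}^2$) and the non-degenerate Hessian conditions are to guarantee that the global minima are genuinely isolated interior points — so that the profiled minima inherit uniqueness and the ``unique global minimum'' hypotheses of Theorems~\ref{thm:CLSconsistencyM}(iii) and~\ref{thm:CLSconsistencyV}(iv) are actually met for the slices $\bbeta=\bbeta^*$ and $\balpha=\balpha^*$; I would spell out that a $\mathcal{C}^2$ function on a compact set with a unique interior global minimum at which the Hessian is nonsingular (hence positive definite) has that point as a strict local minimum, and combined with global uniqueness this transfers to each coordinate slice through that point.

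**The main obstacle.** The substantive point — and the only place where a little care is needed — is precisely the identification $\balpha^*(\bbeta^*)=\balpha^*$ and $\bbeta^*(\balpha^*)=\bbeta^*$, i.e. arguing that the slice of a jointly uniquely-minimized function through its minimizer is itself uniquely minimized at the corresponding coordinate. This is elementary but must be stated, because a priori the profiled objective $M_n$ is minimized over $\balpha$ only, with $\bbeta$ held fixed at the \emph{true} $\bbeta^*$, and the corollary's hypotheses (uniqueness of the \emph{joint} minimum plus nonsingular Hessians) are exactly what is needed to make that slice-minimum unique. Everything else is bookkeeping: verifying that the hypotheses of the two theorems are inherited under the corollary's stronger assumptions, and assembling the two marginal limits into the stated vector limit. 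No rates, no CLT, and no interaction between the two estimation steps enter here, since each estimator is profiled at the \emph{true} value of the nuisance parameter rather than at its estimate.
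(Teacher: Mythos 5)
Your proposal is correct, but it reaches the key identification $\balpha^*(\bbeta^*)=\balpha^*$ and $\bbeta^*(\balpha^*)=\bbeta^*$ by a genuinely different and more elementary route than the paper. The paper works with the first-order conditions: it sets $\varphi(\balpha,\bbeta)=\partial s(\balpha,\bbeta)/\partial\balpha$ and $\varrho(\balpha,\bbeta)=\partial v(\balpha,\bbeta)/\partial\bbeta$, notes $\varphi(\balpha^*(\bbeta),\bbeta)=\zero$ on $\interior{\btheta}_2$, and invokes the implicit function theorem (this is where $s,v\in\mathcal{C}^2$ and the nonsingular Hessians are actually consumed) to conclude that the profile maps $\balpha^*(\cdot)$ and $\bbeta^*(\cdot)$ are the unique $\mathcal{C}^1$ solutions of these stationarity equations, and then evaluates them at $\bbeta^*$ and $\balpha^*$ before stacking the conclusions of Theorems~\ref{thm:CLSconsistencyM} and~\ref{thm:CLSconsistencyV}, exactly as you do in your second step. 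You instead argue directly on the objective: if some $\balpha'\neq\balpha^*$ satisfied $s(\balpha',\bbeta^*)\leq s(\balpha^*,\bbeta^*)$, then $[\balpha'^{\top},\bbeta^{*\top}]^{\top}$ would be a second global minimizer of $s$ on $\btheta_1\times\btheta_2$, contradicting the assumed uniqueness of the joint minimum; hence the slice through $\bbeta^*$ is uniquely minimized at $\balpha^*$, so $\balpha^*(\bbeta^*)=\balpha^*$, and symmetrically for $v$. This slice argument is sound and in fact cleaner: it needs only the uniqueness of the joint global minimum, so the $\mathcal{C}^2$ smoothness and Hessian conditions become inessential for the corollary's conclusion (you correctly flag that you only use them to reconcile the corollary's hypotheses with assumption~(iii) of Theorem~\ref{thm:CLSconsistencyM} and assumption~(iv) of Theorem~\ref{thm:CLSconsistencyV}, and strictly speaking those slice-uniqueness hypotheses already follow from your argument). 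What the paper's heavier IFT machinery buys, beyond the corollary itself, is the additional structural information that the profile paths $\balpha^*(\cdot)$ and $\bbeta^*(\cdot)$ are well-defined and continuously differentiable near the truth, which supports the heuristic behind the alternating Algorithm~\ref{alg:icls}; your argument does not deliver that, but it is not needed for the stated convergence. Your final assembling step (componentwise convergence in probability implies joint vector convergence) matches the paper.
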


\begin{proof}
Let us define $\varphi(\balpha,\bbeta):=\partial s(\balpha,\bbeta)/\partial\balpha$ and $\varrho(\balpha,\bbeta):=\partial v(\balpha,\bbeta)/\partial\bbeta$. Since $\varphi(\balpha^*(\bbeta),\bbeta)=\zero$ for all $\bbeta\in\interior{\btheta}_2$, then for all $\bbeta\in\interior{\btheta}_2$ by the implicit function theorem, there exists a~unique function $\phi\in\mathcal{C}^1$ on the open surrounding of $\bbeta$ such that $\phi(\bbeta)=\balpha$. Due to this uniqueness, $\phi(\cdot)\equiv\balpha^*(\cdot)$.

Similarly, for all $\balpha\in\interior{\btheta}_1$, $\varrho(\balpha,\bbeta^*(\balpha))=\zero$ and, thus, there exists a~unique function $\rho\in\mathcal{C}^1$ on the open surrounding of $\balpha$ such that $\rho(\balpha)=\bbeta$. Hence, $\rho(\cdot)\equiv\bbeta^*(\cdot)\in\mathcal{C}^1(\interior{\btheta}_1)$.

Consequently, since $\varphi(\balpha^*(\bbeta^*),\bbeta^*)=0$ and $\varrho(\balpha^*,\bbeta^*(\balpha^*))=0$, then $\balpha^*(\bbeta^*)=\bbeta^*$ and $\bbeta^*(\balpha^*)=\balpha^*$. Nevertheless by Theorems~\ref{thm:CLSconsistencyM} and~\ref{thm:CLSconsistencyV},
\[
\left[\begin{array}{c}\widehat{\balpha}(\bbeta^*)\\\widehat{\bbeta}(\balpha^*)\end{array}\right]\xrightarrow[n\to\infty]{\P}\left[\begin{array}{c}\balpha^*(\bbeta^*)\\\bbeta^*(\balpha^*)\end{array}\right]=\left[\begin{array}{c}\balpha^*\\\bbeta^*\end{array}\right].
\]
\end{proof}

Variance parameter $\bbeta$ can be viewed as a~nuisance parameter when estimating the mean parameter $\balpha$ and vice-versa. The idea of joint estimation of $[\balpha^{\top},\bbeta^{\top}]^{\top}$ is to alternately perform partial optimizations from Definition~\ref{def:CLSest}. In fact, we iteratively estimate $\balpha$ given the fixed value of $\bbeta$ and, consequently, we estimate $\bbeta$ given the fixed value of $\balpha$ (obtained from previous step). This two steps are repeated in turns until almost no change in consecutive estimates of $[\balpha^{\top},\bbeta^{\top}]^{\top}$, see Algorithm~\ref{alg:icls}. Based on Corollary~\ref{col:consistency}, it is believed that each turn will bring our iterated estimates closer to the true unknown parameter values. Moreover, Algorithm~\ref{alg:icls} can be modified: the initial value of $\balpha^{(0)}$ could be required on the input instead of $\bbeta^{(0)}$ and the whole iteration procedure would start with the estimation of $\bbeta^{(1)}$ instead of $\balpha^{(1)}$.

\begin{algorithm}[!htb]
\onehalfspacing\caption{Iterative conditional least squares estimation of $\balpha$ and $\bbeta$.}
\label{alg:icls}
\algsetup{indent=2em}
\begin{algorithmic}[1]%\onehalfspacing
\REQUIRE Cumulative claims triangle $\{Y_{i,j}\}_{i,j=1}^{n,n+1-i}$, mean and variance functions $\mu$ and $\sigma$, initial (starting) parameter value $\bbeta^{(0)}$, maximum number of iterations $M$, and convergence precision $\epsilon$.
\ENSURE CLS parameter estimates $\widehat{\balpha}$ and $\widehat{\bbeta}$, fitted residuals $\{\widehat{\eps}_{i,j}\}_{i=1,j=2}^{n-1,n+1-i}$.
\STATE $m \leftarrow 1$ and $\balpha^{(0)} \leftarrow\zero$
\STATE $\balpha^{(1)} \leftarrow \arg\min_{\balpha\in\btheta_1}M_n(\balpha,\bbeta^{(0)})$
\STATE $\bbeta^{(1)} \leftarrow \arg\min_{\bbeta\in\btheta_2}V_n(\balpha^{(1)},\bbeta)$
\WHILE{$m\leq M$ \AND $\|[\balpha^{(m)\top},\bbeta^{(m)\top}]^{\top}-[\balpha^{(m-1)\top},\bbeta^{(m-1)\top}]^{\top}\|>\epsilon$}
\STATE $\balpha^{(m+1)} \leftarrow \arg\min_{\balpha\in\btheta_1}M_n(\balpha,\bbeta^{(m)})$
\STATE $\bbeta^{(m+1)} \leftarrow \arg\min_{\bbeta\in\btheta_2}V_n(\balpha^{(m+1)},\bbeta)$
\STATE $m \leftarrow m+1$
\ENDWHILE
\STATE $\widehat{\balpha} \leftarrow \balpha^{(m)}$ and $\widehat{\bbeta} \leftarrow \bbeta^{(m)}$
\FOR{$i=1$ \TO $n-1$}
\FOR{$j=2$ \TO $n+1-i$}
\STATE $\widehat{\eps}_{i,j} \leftarrow [Y_{i,j}-\mu(Y_{i,j-1},\widehat{\balpha},j)]/\sigma(Y_{i,j-1}\widehat{\bbeta},j)$
\ENDFOR
\ENDFOR
\end{algorithmic}
\end{algorithm}

In order to demonstrate the \emph{finite-sample performance} of the CLS estimates, a~\emph{small simulation study} is provided. 200 triangles with $n=11$ accident years are simulated. The claim amounts in the first column of each triangle are $iid$ gamma distributed with mean and also variance equal to $10^5$. The mean and variance functions of the CMV model are chosen as in the forthcoming real data example (Section~\ref{sec_data}), see CMV structure~\eqref{CMVstructure}. The marginal distribution of the errors is the standard gaussian and the bivariate copula is the Gumbel copula with parameter $2$. The true unknown parameter values together with the sample means and sample standard deviations of their CLS estimates based on 200 triangle samples are shown in Table~\ref{tab:simul}. All four sample means are close to the corresponding true values, which is in concordance with the asymptotic theory derived above.

\begin{table}[!htb]
\renewcommand{\arraystretch}{1.1}
\begin{center}
\begin{tabular}{lrrrr}%{lx{1.5cm}x{1.5cm}x{1.5cm}x{1.5cm}}%
\toprule
Parameter & $\alpha_1$ & $\alpha_2$ & $\beta_1$ & $\beta_2$ \tabularnewline
\midrule
True value & $2.000$ & $1.000$ & $100.000$ & $0.500$\tabularnewline
Mean & $2.001$ & $1.001$ & $101.214$ & $0.508$\tabularnewline
Standard deviation & $0.029$ & $0.031$ & $46.405$ & $0.131$\tabularnewline
%Median & $2.003$ & $0.996$ & $93.089$ & $0.493$\tabularnewline
\bottomrule
\end{tabular}
\end{center}
\caption{Sample means and sample standard deviations of 200 CLS estimates for the true parameter values (based on simulations).}
\label{tab:simul}
\renewcommand{\arraystretch}{1.0}
\end{table}

%%%> apply(a1sim,1,median)
%%%[1] 2.002525 0.996357
%%%> apply(b1sim,1,median)
%%%[1] 92.9852965  0.4949232
%%%> apply(a1sim,1,mean)
%%%[1] 2.001201 1.001215
%%%> apply(b1sim,1,mean)
%%%[1] 101.2669505   0.5078338
%%%> apply(a1sim,1,sd)
%%%[1] 0.02784798 0.03135566
%%%> apply(b1sim,1,sd)
%%%[1] 46.3929329  0.1306449

%%%> apply(a1sim,1,median)
%%%[1] 2.0024445 0.9963674
%%%> apply(b1sim,1,median)
%%%[1] 93.087423  0.494781
%%%> apply(a1sim,1,mean)
%%%[1] 2.001203 1.001225
%%%> apply(b1sim,1,mean)
%%%[1] 101.2521344   0.5076976
%%%> apply(a1sim,1,sd)
%%%[1] 0.02785393 0.03134962
%%%> apply(b1sim,1,sd)
%%%[1] 46.4531446  0.1307624

%> apply(a1sim,1,median)
%[1] 2.0025007 0.9962837
%> apply(b1sim,1,median)
%[1] 93.0889304  0.4931081
%> apply(a1sim,1,mean)
%[1] 2.001195 1.001210
%> apply(b1sim,1,mean)
%[1] 101.2140004   0.5076216
%> apply(a1sim,1,sd)
%[1] 0.02785121 0.03135113
%> apply(b1sim,1,sd)
%[1] 46.4052785  0.1306493

\subsection{Estimation of Dependence Structure}
The second stage of the parameter estimation process involves the estimation of the whole dependence structure in the claims triangle. Indeed, the strict stationarity of the first order Markov process imposed on the CMV model errors (Assumption~C) together with Sklar's theorem arrange that only the copula parameter and the marginal distribution of the errors are necessary to know the bivariate distribution of two in row neighbouring errors.

Since the estimates of the CMV model parameters are already available, one can estimate the unknown \emph{marginal distribution} function $G_{\balpha,\bbeta}$ of CMV model errors $\eps_{i,j}(\balpha,\bbeta)$ non-parametrically by the empirical distribution function
\[
\widehat{G}_n(e)=\frac{1}{n(n-1)/2+1}\sum_{	i=1}^{n-1}\sum_{j=2}^{n+1-i}\mathcal{I}\{\widehat{\eps}_{i,j}(\widehat{\balpha},\widehat{\bbeta})\leq e\},
\]
of the fitted residuals
\[
\widehat{\eps}_{i,j}(\widehat{\balpha},\widehat{\bbeta})=\frac{Y_{i,j}-\mu(Y_{i,j-1},\widehat{\balpha},j)}{\sigma(Y_{i,j-1},\widehat{\bbeta},j)}.
\]
The consistency of the CMV model parameter estimates $\widehat{\balpha}$ and $\widehat{\bbeta}$ ensures that the fitted residuals $\widehat{\eps}_{i,j}(\widehat{\balpha},\widehat{\bbeta})$ are reasonable predictors of the unknown non-observable errors $\eps_{i,j}(\balpha,\bbeta)$. Algorithm~\ref{alg:icls} also provides the fitted residuals as a~side product.

Assumption~C demands a~prior knowledge of parametric copula up to its unknown parameter $\bgamma$. Nevertheless in practical applications, one needs to perform a~copula goodness-of-fit in order to choose a~suitable copula. Assuming that we know the bivariate copula function $C(\cdot,\cdot;\bgamma)$, copula parameter $\bgamma$ is estimated by the \emph{quasi-likelihood} method.

%from the joint density of the row-vectorized triangle
%\[
%[\eps_{1,2},\eps_{1,3},\ldots,\eps_{0,n},\eps_{1,1},\ldots,\eps_{1,n},\ldots,\eps_{n-1,2},\eps_{n-1,1},\eps_{n,1}]
%\]
Having in mind that rows of errors $[\eps_{i,1}(\balpha,\bbeta),\ldots,\eps_{i,n+1-i}(\balpha,\bbeta)]$ are independent for all $i$, the full log-likelihood for copula parameter $\bgamma$ with respect to~\eqref{copula-cond} is
\begin{multline*}
\mathcal{L}(\bgamma)=\sum_{i=1}^{n-2}\sum_{j=2}^{n+1-i}\log g_{\balpha,\bbeta}(\eps_{i,j}(\balpha,\bbeta))\\+\sum_{i=1}^{n-2}\sum_{j=3}^{n+1-i}\log c(G_{\balpha,\bbeta}(\eps_{i,j-1}(\balpha,\bbeta)),G_{\balpha,\bbeta}(\eps_{i,j}(\balpha,\bbeta));\bgamma).
\end{multline*}
Ignoring the first term in $\mathcal{L}(\bgamma)$ and replacing $\eps$'s and $G_{\balpha,\bbeta}$ by their estimated counterparts $\widehat{\eps}$'s and $\widehat{G}_n$, parameter $\bgamma$ can be estimated by the so-called canonical maximum likelihood, i.e., maximizing the \emph{partial (pseudo) log-likelihood}:
\begin{align*}
\widehat{\bgamma}&=\arg\max_{\bgamma}\widetilde{\mathcal{L}}(\bgamma),\\
\widetilde{\mathcal{L}}(\bgamma)&=\sum_{i=1}^{n-2}\sum_{j=3}^{n+1-i}\log c(\widehat{G}_n(\widehat{\eps}_{i,j-1}(\widehat{\balpha},\widehat{\bbeta})),\widehat{G}_n(\widehat{\eps}_{i,j}(\widehat{\balpha},\widehat{\bbeta}));\bgamma).
\end{align*}
The correctness of this approach was shown by~\cite{ChenFan2006a}, because the consistency of the canonical likelihood estimate of copula parameter $\bgamma$ was proved under Assumption~C. Here, the unknown unobservable CMV model errors are just replaced by the fitted residuals based on the consistent CMV model parameter estimates. Algorithm~\ref{alg:copula} encapsulates the way of getting copula parameter estimate.

%\cite{ChenFan2006b}

%\cite{ChanEtAl2009}

\begin{algorithm}[!htb]
\onehalfspacing\caption{Copula parameter $\bgamma$ estimation by pseudo-likelihood.}
\label{alg:copula}
\algsetup{indent=2em}
\begin{algorithmic}[1]%\onehalfspacing
\REQUIRE Fitted residuals $\{\widehat{\eps}_{i,j}(\widehat{\balpha},\widehat{\bbeta})\}_{i=1,j=2}^{n-1,n+1-i}$ and copula density $c(\cdot,\cdot;\bgamma)$.
\ENSURE Copula parameter estimate $\widehat{\bgamma}$.
\STATE marginal ecdf $\widehat{G}_n(e) \leftarrow \frac{1}{n(n-1)/2+1}\sum_{i=1}^{n-1}\sum_{j=2}^{n+1-i}\mathcal{I}\{\widehat{\eps}_{i,j}(\widehat{\balpha},\widehat{\bbeta})\leq e\}$
\STATE $\widehat{\bgamma} \leftarrow \arg\max_{\bgamma}\sum_{i=1}^{n-2}\sum_{j=3}^{n+1-i}\log c(\widehat{G}_n(\widehat{\eps}_{i,j-1}(\widehat{\balpha},\widehat{\bbeta})),\widehat{G}_n(\widehat{\eps}_{i,j}(\widehat{\balpha},\widehat{\bbeta}));\bgamma)$
\end{algorithmic}
\end{algorithm}

\cite{ChenFan2006a} also remarked that the empirical distribution function $\widehat{G}_n$ can be \emph{smoothed by kernels} as an alternative estimate of the marginal distribution of errors. This can especially be helpful in case of a~smaller number of residuals available.

Now, the whole estimation procedure became a~\emph{semiparametric} one.

\section{Prediction of Reserves and Estimating Their Distribution}\label{sec_prediction}
The main goals in actuarial reserving are prediction of reserves $R_i^{(n)}$ and, consequently, estimation of the reserves' distribution, e.g., in order to obtain quantiles---$99.5\%$ quantile for the Solvency II purposes (VaR calculation).

A~predictor for reserve $R_i^{(n)}$ can be defined as
\[
\widehat{R}_i^{(n)}=\widehat{Y}_{i,n}-Y_{i,n+1-i}.
\]
Therefore, finding a~predictor $\widehat{Y}_{i,n}$ for $Y_{i,n}$ is crucial. One naive proposal is to predict $Y_{i,j}$ as conditional mean of previous claim amount $Y_{i,j-1}$ with the plugged-in CMV model parameter estimates
\begin{equation}\label{wrong_pred}
\widecheck{Y}_{i,j}:=\E_{\balpha,\bbeta}[Y_{i,j}|Y_{i,n+1-i}]|_{\balpha=\widehat{\balpha},\bbeta=\widehat{\bbeta}},\quad i=2,\ldots,n,\,j=n+2-i,\ldots,n.
\end{equation}
This approach gives $\widecheck{Y}_{i,j}=\mu(\widecheck{Y}_{i,j-1},\widehat{\balpha},j),\, i+j> n+1$. 
%\begin{subequations}
%\begin{align*}
%\widecheck{Y}_{i,j}&=Y_{i,j},\quad i+j\leq n+1;\\
%\widecheck{Y}_{i,j}&=\mu(\widecheck{Y}_{i,j-1},\widehat{\balpha},j),\quad i+j> n+1.
%\end{align*}
%\end{subequations}
In spite of that, such approach would be eligible only if prediction~\eqref{wrong_pred} was unbiased or at least asymptotically unbiased, which is not justified.

\subsection{Semiparametric Bootstrap}
Prediction of unobserved claims may be done in a~\emph{telescopic} way based on the CMV model formulation: start with the diagonal element $Y_{i,n+1-i}$ and predict $Y_{i,j},\,j>n+1-i$ stepwise in each row
\begin{subequations}\label{prediction}
\begin{align}
\widehat{Y}_{i,j}&=Y_{i,j},\quad i+j\leq n+1;\\
\widehat{Y}_{i,j}&=\mu(\widehat{Y}_{i,j-1},\widehat{\balpha},j)+\sigma(\widehat{Y}_{i,j-1},\widehat{\bbeta},j)\widetilde{\eps}_j,\quad i+j> n+1.
\end{align}
\end{subequations}
Errors $\widetilde{\eps}_j$ are simulated from the fitted residuals. One convenient resampling procedure is the semiparametric bootstrap which takes advantage of the fact that $\eps_{i,j}(\balpha,\bbeta)=G_{\balpha,\bbeta}^{-1}(X_{j})$ for all $i$ (due to the independent rows), where $\{X_j\}_{j=2}^n$ is a~stationary first-order Markov process with the copula $C(x_1,x_2;\bgamma)$ being the joint distribution of $[X_{j-1},X_{j}]$.

When errors $\widetilde{\eps}_j$ are simulated sufficiently many times, the empirical (bootstrap) distribution of $\widehat{Y}_{i,n}$ is obtained, which should mimic the true unknown distribution of $Y_{i,n}$. Thereafter, an~estimate of reserves' distribution is acquired and some imported quantities of the reserves can be easily calculated, e.g., mean, variance, or quantiles of the reserves. The whole procedure is summed up in Algorithm~\ref{alg:pred}.

\begin{algorithm}[!htb]
\onehalfspacing\caption{Predictions of claims reserves and estimation of their distributions by semiparametric bootstrap.}
\label{alg:pred}
\algsetup{indent=2em}
\begin{algorithmic}[1]%\onehalfspacing
\REQUIRE Latest diagonal of cumulative claims $\{Y_{i,n+1-i}\}_{i=2}^{n}$, empirical distribution function $\widehat{G}_n$ of residuals, parametric mean and variance functions $\mu$ and~$\sigma$, inverse of conditional parametric copula $C^{-1}_{2|1}$, parameter estimates $\widehat{\balpha}$, $\widehat{\bbeta}$, and $\widehat{\bgamma}$, number of bootstrap simulations $B$.
\ENSURE Empirical bootstrap distribution of predicted reserves $\widehat{R}_i^{(n)}$, i.e., the empirical distribution where probability mass $1/B$ concentrates at each of ${}_{(1)}\widehat{R}_i^{(n)},\ldots,{}_{(B)}\widehat{R}_i^{(n)}$, $i=2,\ldots,n$.
\FOR{$b=1$ \TO $B$}
\STATE generate $n-1$ independent $U(0,1)$ random variables $\{X_j\}_{j=2}^{n}$
\STATE ${}_{(b)}U_2 \leftarrow X_2$
\STATE ${}_{(b)}\widehat{\eps}_2 \leftarrow \widehat{G}_n^{-}({}_{(b)}U_2)$ \COMMENT{$\widehat{G}_n^{-}$ is the empirical quantile function}
\FOR{$j=3$ \TO $n$}
\STATE ${}_{(b)}U_j \leftarrow C^{-1}_{2|1}(X_j|{}_{(b)}U_{j-1};\widehat{\bgamma})$
\STATE ${}_{(b)}\widehat{\eps}_j \leftarrow \widehat{G}_n^{-}({}_{(b)}U_j)$
\ENDFOR
\ENDFOR
\STATE center bootstrap residuals ${}_{(b)}\widetilde{\eps}_j \leftarrow {}_{(b)}\widehat{\eps}_j - \frac{1}{n-1}\sum_{l=2}^n {}_{(b)}\widehat{\eps}_l$, $b=1,\ldots,B$
\FOR{$i=2$ \TO $n$}
\FOR[repeat to obtain empirical distribution of $\widehat{R}_i^{(n)}$]{$b=1$ to $B$}
\STATE ${}_{(b)}\widehat{Y}_{i,n+1-j} \leftarrow Y_{i,n+1-i}$
\FOR{$j=n+2-i$ \TO $n$}
\STATE ${}_{(b)}\widehat{Y}_{i,j} \leftarrow \mu({}_{(b)}\widehat{Y}_{i,j-1},\widehat{\balpha},j)+\sigma({}_{(b)}\widehat{Y}_{i,j-1},\widehat{\bbeta},j){}_{(b)}\widetilde{\eps}_j$
\ENDFOR
\STATE ${}_{(b)}\widehat{R}_i^{(n)} \leftarrow {}_{(b)}\widehat{Y}_{i,n}-Y_{i,n+1-i}$
\ENDFOR
\ENDFOR
\end{algorithmic}
\end{algorithm}

%let us denote observable triangle
%\[
%{\boldsymbol\triangle}_n=\{Y_{i,j}:\,i+j\leq n+1\}
%\]

%necessary to estimate
%\[
%MSEP_i^{(n)}=\E[(\widehat{R}_i^{(n)}-R_i^{(n)})^2|{\boldsymbol\triangle}_n]
%\]

%\subsection{CDR}
%\cite{MW2008}
%let us denote observable one-year ahead triangle (incomplete)
%\[
%\widetilde{\boldsymbol\triangle}_{n+1}=\{Y_{i,j}:\,i+j\leq n+1,i\leq n\}
%\]
%
%need to estimate
%\[
%CDR_i^{(n+1)}=\E[Y_{i,n}|{\boldsymbol\triangle}_n]-\E[Y_{i,n}|\widetilde{\boldsymbol\triangle}_{n+1}]
%\]

\section{Real Data Analysis}\label{sec_data}
The proposed method is illustrated on a~data set from~\cite{Zehnwirth}. Here, data for 11 accident years are available ($n=11$). The data analyses are conducted in R~program~\citep{Rko}.

Several candidates for the mean and variance function $\mu$ and $\sigma$ were considered (see Section~\ref{sec_model}). The CLS were used to estimate the parameters of the CMV model. By inspecting the fitted residuals graphically and comparing the residuals' mean square error, the following structure of the CMV model was taken into account
\begin{subequations}\label{CMVstructure}
\begin{align}
\mu(Y_{i,j-1},\balpha,j)&=\left(1+\alpha_1\alpha_2j^{-1-\alpha_2}\exp\left\{\alpha_1j^{-\alpha_2}\right\}\right)Y_{i,j-1},\\
\sigma(Y_{i,j-1},\bbeta,j)&=\beta_1\exp\{-\beta_2j\}\sqrt{Y_{i,j-1}}.
\end{align}
\end{subequations}
The CLS estimates are $\widehat{\alpha}_1=2.033$, $\widehat{\alpha}_2=1.106$, $\widehat{\beta}_1=109.8$, and $\widehat{\beta}_2=0.4053$. Figure~\ref{fig:residuals} shows the fitted residuals for the CMV model structure~\eqref{CMVstructure}.

\begin{figure}[!htb]
\centering
\includegraphics[width=\textwidth]{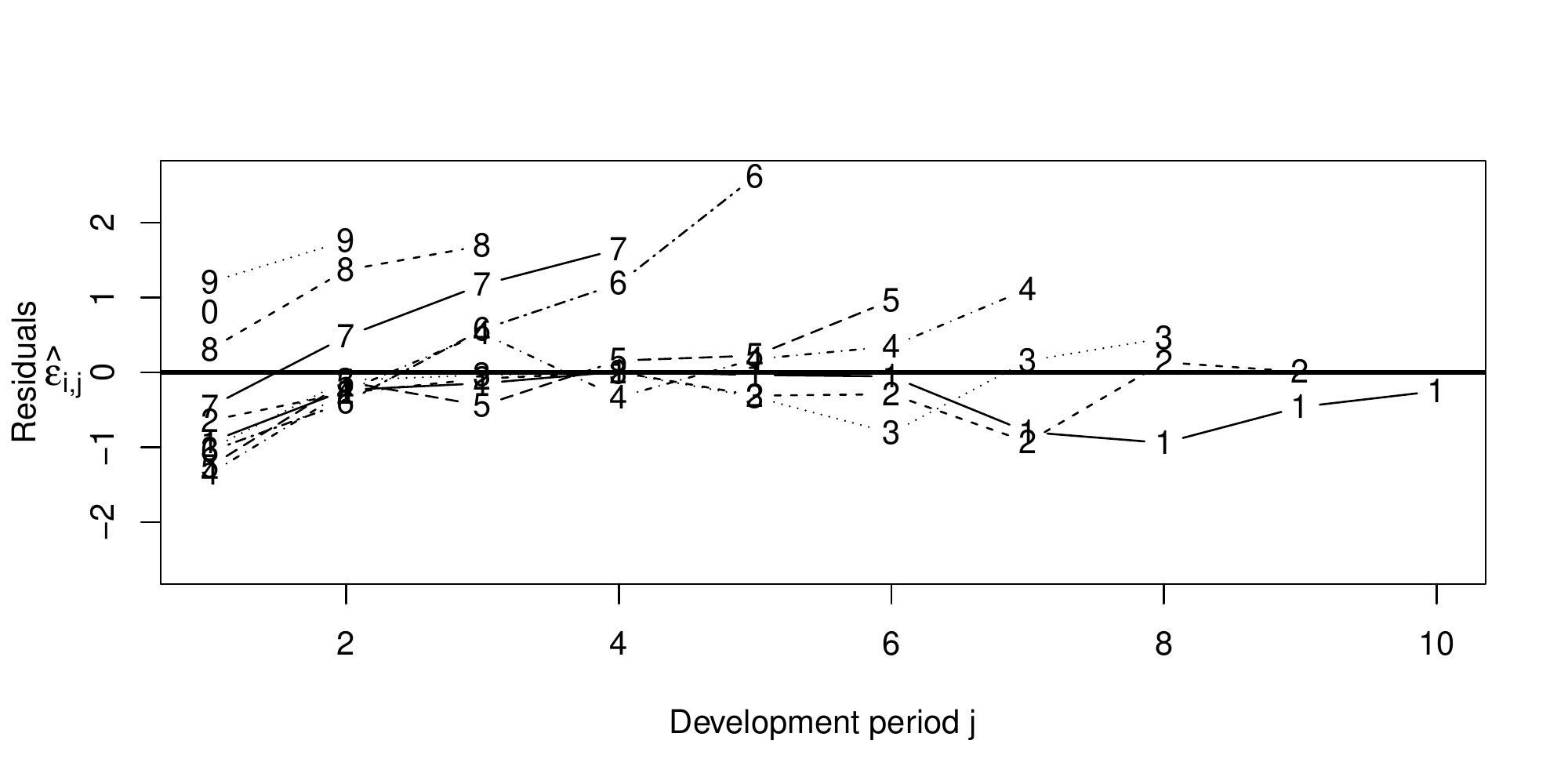}
\caption{Residuals of the CMV model~\eqref{CMVstructure} with a~common accident year are connected by lines ($0$ stands for the accident year $i=10$).}\label{fig:residuals}
\end{figure}

There is still some slight pattern (trend) not captured by mean and variance parametric part of the model (Figure~\ref{fig:residuals}), but that will be modeled by the residuals' dependence. Kendall $\tau$ for the pairs of consecutive residuals $\{[\widehat{\eps}_{i,j-1}(\widehat{\balpha},\widehat{\bbeta}),\widehat{\eps}_{i,j}(\widehat{\balpha},\widehat{\bbeta})]\}_{i=1,j=3}^{n-2,n+1-i}$ equals $0.43$, which indicates at least mild dependence.

Three Archimedean copulae (Clayton, Frank, and Gumbel) together with Gaussian and Student $t_5$-copula (the degrees of freedom were set in order to have finite fourth moment) were considered for modeling the dependence among the errors. We have performed the $S_n^{(C)}$ goodness-of-fit test proposed by~\citet[Section~4]{GRB2009}. This test is preferred in our case for two reasons: it was regarded by the above cited power study as a~recommendable test and it relies on Rosenblatt's transform (a~single bootstrap is enough to approximate the null distribution of the test statistic and extract $p$-values). The \emph{Gumbel copula} ($\widehat{\bgamma}=1.776$) was chosen according to the goodness-of-fit test (providing the highest $p$-value equal $0.30$). The Gumbel copula exhibits strong right tail dependence and relatively weak left tail dependence. In our data set, the transformed residuals---transformed by the residuals' marginal empirical distribution function $\widehat{G}_n$ (having uniform margins)---seem to be strongly correlated at high values but less correlated at low values (see Figure~\ref{fig:copula-gumbel}). Then, the Gumbel copula is indeed an appropriate choice. The pairs of original residuals $[\widehat{\eps}_{i,j-1}(\widehat{\balpha},\widehat{\bbeta}),\widehat{\eps}_{i,j}(\widehat{\balpha},\widehat{\bbeta})]$ are shown in Figure~\ref{fig:copula-gumbel} as well. Moreover, Figure~\ref{fig:copula-gumbel} contains the density for the Gumbel copula $C(\cdot,\cdot;\widehat{\bgamma})$ and the first $500$ pairs of the bootstrapped residuals, which should jointly ``behave'' like the original ones.

%7.II.(b)
%- frank 0.06
%- clayton not conv
%- gumbel 0.30
%- normal 0.10
%- $t_5$ (in order to have finite fourth moment) 0.16
%SnB:
%- frank 0.04046
%- clayton not conv
%- gumbel 0.04745

\begin{figure}[!htb]
\centering
\includegraphics[width=\textwidth]{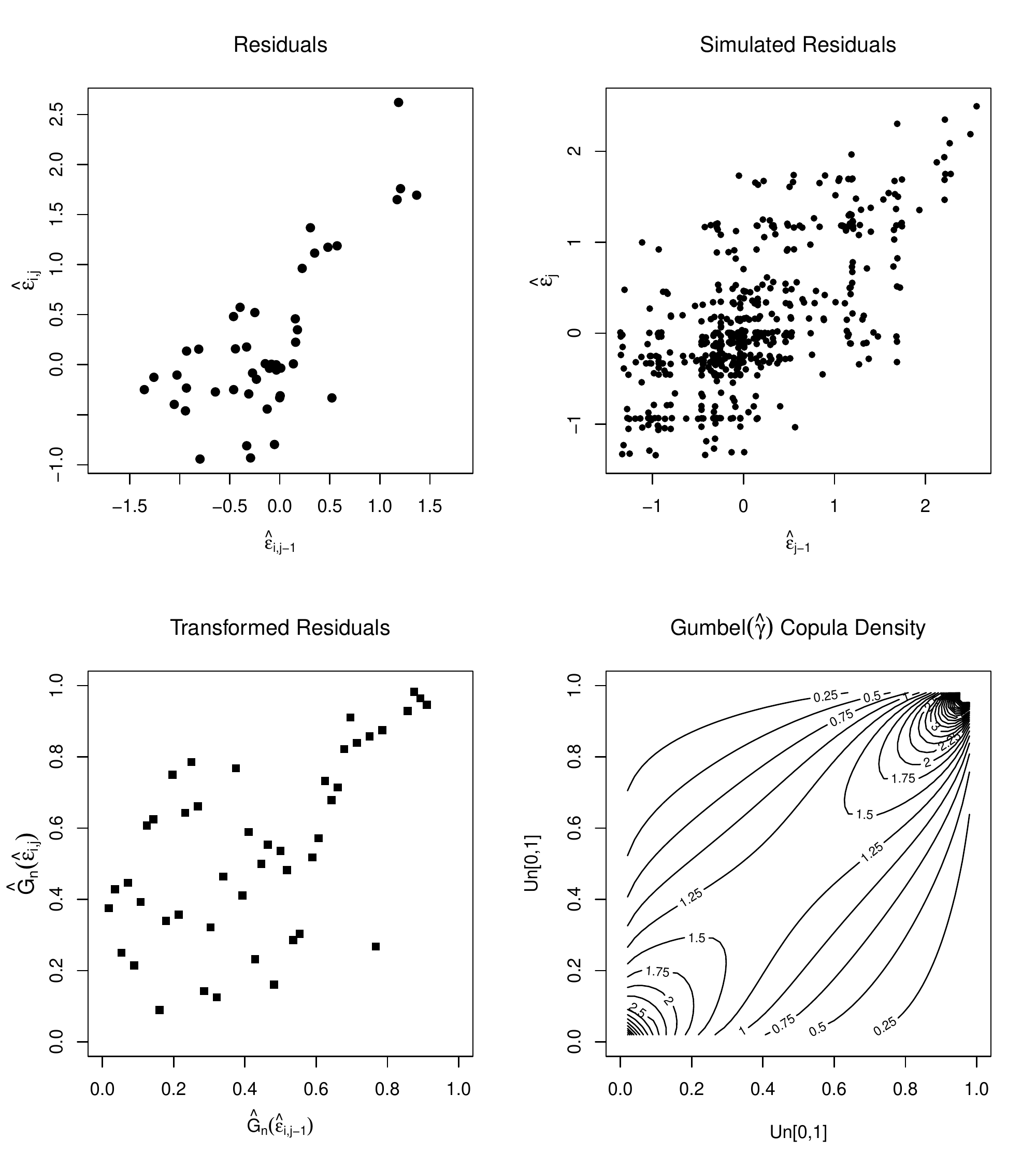}
\caption{Pairs of original fitted residuals $[\widehat{\eps}_{i,j-1}(\widehat{\balpha},\widehat{\bbeta}),\widehat{\eps}_{i,j}(\widehat{\balpha},\widehat{\bbeta})]$ (top left), transformed residuals by the marginal empirical distribution function $\widehat{G}_n$ (bottom left), estimated density $c(\cdot,\cdot;\widehat{\bgamma})$ of the Gumbel copula (bottom right), simulated (bootstrapped) residuals $[{}_{(b)}\widehat{\eps}_{j-1},{}_{(b)}\widehat{\eps}_{j}]$ from the Gumbel copula (top right).}\label{fig:copula-gumbel}
\end{figure}

As a~benchmark to the CMV model with copula Assumption~C, the classical bootstrap version of chain ladder~\citep[A3.1]{england2002} is chosen. It has some disadvantages, which can be overcome by our approach: the number of parameters (development factors) depending on the sample size, some parameters (the last development factor) estimated by just ratio of two numbers (yielding zero sample variance), questionable consistency of the estimates (conditional consistency recently proved by~\cite{PH2012}), and the non-realistic assumption of independence of the residuals.

% process.distr="od.pois" ... EV2002 Appendix A3.1

The results of the proposed approach---\emph{conditional least squares with copula (CLSC)}---and the traditional one---\emph{bootstrapped chain ladder (BCL)}---are compared numerically in Table~\ref{tab:compar} and graphically in Figure~\ref{fig:reserves}. In both cases, the number of bootstrap replications is $B=5000$.

%\setlength{\tabcolsep}{8pt}
%\begin{landscape}
\begin{table}[!htb]
\renewcommand{\arraystretch}{1.1}
\begin{center}
\begin{tabular}{cx{1.1cm}x{1.1cm}x{1.1cm}x{1.1cm}x{1.1cm}x{1.1cm}x{1.1cm}x{1.1cm}}
\toprule
$i$ & \multicolumn{2}{c}{Reserve} & \multicolumn{2}{c}{Standard error} & \multicolumn{2}{c}{$95\%$-quantile} & \multicolumn{2}{c}{$99.5\%$-quantile} \tabularnewline
\cmidrule(lr){2-3} \cmidrule(lr){4-5} \cmidrule(lr){6-7} \cmidrule(lr){8-9}
 & BCL & CLSC & BCL & CLSC & BCL & CLSC & BCL & CLSC\tabularnewline
\midrule
2 & $14$ & $15$ & $5$ & $1$ & $24$ & $16$ & $31$ & $17$\tabularnewline
3 & $38$ & $38$ & $8$ & $2$ & $51$ & $42$ & $61$ & $44$\tabularnewline
4 & $64$ & $65$ & $10$ & $3$ & $81$ & $70$ & $91$ & $73$\tabularnewline
5 & $100$ & $102$ & $12$ & $4$ & $121$ & $109$ & $132$ & $114$\tabularnewline
6 & $144$ & $145$ & $14$ & $6$ & $168$ & $156$ & $183$ & $162$\tabularnewline
7 & $211$ & $208$ & $17$ & $8$ & $241$ & $222$ & $261$ & $231$\tabularnewline
8 & $385$ & $373$ & $24$ & $13$ & $427$ & $396$ & $450$ & $411$\tabularnewline
9 & $765$ & $728$ & $38$ & $24$ & $831$ & $769$ & $868$ & $794$\tabularnewline
10 & $1\,362$ & $1\,294$ & $59$ & $44$ & $1\,464$ & $1\,370$ & $1\,531$ & $1\,413$\tabularnewline
11 & $2\,195$ & $2\,153$ & $111$ & $81$ & $2\,389$ & $2\,293$ & $2\,508$ & $2\,374$\tabularnewline
\midrule
Total & \cellcolor[gray]{0.6}{$5\,279$} & \cellcolor[gray]{0.3}{\color{white} $5\,122$} & $172$ & $115$ & $5\,565$ & $5\,317$ & \cellcolor[gray]{0.6}{$5\,751$} & \cellcolor[gray]{0.3}{\color{white} $5\,432$}\tabularnewline
\bottomrule
\end{tabular}
\end{center}
\caption{BCL and CLSC reserving results for~\cite{Zehnwirth} data (values in thousands).}
\label{tab:compar}
\renewcommand{\arraystretch}{1.0}
\end{table}
%\end{landscape}

\begin{figure}[!htb]
\centering%width=.89\textwidth
\includegraphics[width=.86\textwidth]{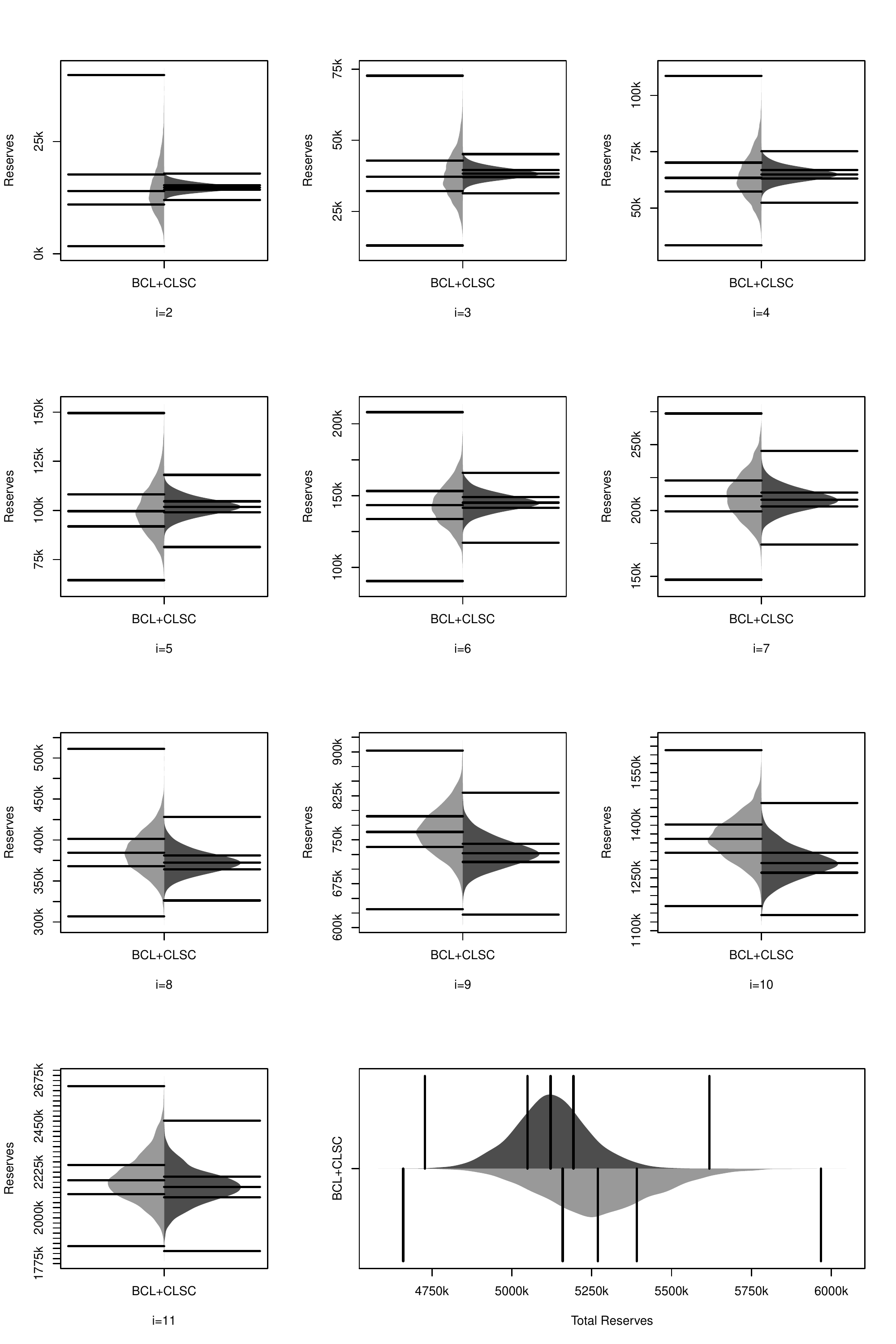}
\caption{Reserves distribution based on BCL (light gray) and CLSC (dark gray).}\label{fig:reserves}
\end{figure}

For each particular accident year and in total as well, the CLSC approach provides slightly smaller predictions of reserves than the BCL one. But even more important is that the estimates of the reserves' distribution are \emph{less volatile} (e.g., smaller standard error or smaller difference between the third and the first quartile) and have \emph{lighter right tails} (and, hence, smaller high quantiles) for the CLSC compared to the BCL. By virtue of such unambiguous outcomes, wasting number of parameters and impetuous neglecting of present dependence in data really matters.

Additionally, an interesting finding comes out, when the CMV model with copula approach is compared to the approach suggested by~\cite{HP2013}, where the dependence is modeled by generalized estimating equations (GEE). Such a~different method provides very similar results. For instance, the GEE (with logarithmic link function, AR(1) correlation structure, and linear variance function) provides coefficient of variation (CoV) $1.95$ for the~\cite{Zehnwirth} data, whereas the CoV in case of the CLSC method is equal to $1.99$. For a~complete comparison, the CoV for the traditional nonparametric BCL, which disregards dependence, equals  $3.29$.

\section{Conclusions and Discussion}
This paper proposes the \emph{conditional mean and variance (CMV)} time series model with innovations being a~\emph{stationary first-order Markov process}. Such a~framework is demonstrated to be suitable for stochastic claims reserving in general insurance. It brings several advantages, which should be though of a~natural relaxation of some too restrictive assumptions in the traditional methods. Indeed in contrast to the classical reserving techniques, a~large number of the mean and variance structures for claims is allowed by the general model's definition yielding a~very \emph{flexible modeling approach}, relatively \emph{smaller number of model parameters} not depending on the number of development periods is required, and time series \emph{innovations} (errors) are \emph{not} considered as \emph{independent}. All of the previously mentioned benefits contribute to the increase in precision of the claims reserves' prediction (e.g., smaller variability, lighter tails). Furthermore, \emph{claims forecasting} beyond the last observed development period becomes straightforward.

The \emph{conditional least squares (CLS)} together with the \emph{copula} approach provide parameter estimates of the assumed model. Moreover, \emph{semiparametric bootstrap}, as an~extension based on the CLS and copulae modeling purview, is used in order to estimate the whole distribution of predicted reserves.

Generally, the consistency of every bootstrap procedure relies on the \emph{consistency of the parameter estimates}. In the proposed approach, the consistency of the CLS parameter estimates in the CMV model is shown, which gives \emph{validity for the bootstrapping}. This brings the CMV model to the fore, because the consistency for the chain ladder development factors (together with the necessary and sufficient condition), that has recently been proved by~\cite{PH2012}, is only conditional, which makes the further usage more complicated.

Our approach can also be robustified in the sense that not only the last observed diagonal of claims is taken into account for the reserves' prediction as in~\eqref{prediction}, but the telescopic prediction begins from each of the observed claims. Finally, the procedure of predicting reserves can even be generalized by incorporating time-varying copulae, which enables to omit the stationary dependence structure, although, it requires more complicated setup.

\section*{Acknowledgments}
The research of Michal Pe\v{s}ta was supported by the Czech Science Foundation project GA\v{C}R No.~P201/13/12994P. Financial support through CRC 649 ``\"Okonomisches Risiko'' is gratefully acknowledged by Ostap Okhrin.

\bibliographystyle{elsarticle-harv}
\bibliography{claims}

%% Authors are advised to submit their bibtex database files. They are
%% requested to list a bibtex style file in the manuscript if they do
%% not want to use elsarticle-harv.bst.

%% References without bibTeX database:

% \begin{thebibliography}{00}

%% \bibitem must have one of the following forms:
%%   \bibitem[Jones et al.(1990)]{key}...
%%   \bibitem[Jones et al.(1990)Jones, Baker, and Williams]{key}...
%%   \bibitem[Jones et al., 1990]{key}...
%%   \bibitem[\protect\citeauthoryear{Jones, Baker, and Williams}{Jones
%%       et al.}{1990}]{key}...
%%   \bibitem[\protect\citeauthoryear{Jones et al.}{1990}]{key}...
%%   \bibitem[\protect\astroncite{Jones et al.}{1990}]{key}...
%%   \bibitem[\protect\citename{Jones et al., }1990]{key}...
%%   \harvarditem[Jones et al.]{Jones, Baker, and Williams}{1990}{key}...
%%

% \bibitem[ ()]{}

% \end{thebibliography}

\end{document}